\journal{Journal of \LaTeX\ Templates}
\begin{document}

\begin{frontmatter}

\title{Solving Homotopy Domain Equations \tnoteref{mytitlenote}}


\author{Daniel O. Martínez-Rivillas}

\author{Ruy J.G.B. de Queiroz}


\begin{abstract}
In order to get $\lambda$-models with a rich structure of $\infty$-groupoid, which we call ``homotopy $\lambda$-models'', a general technique is described for solving domain equations on any Cartesian closed $\infty$-category (c.c.i.) with enough points. Finally, the technique is applied in a particular c.c.i., where some examples of homotopy $\lambda$-models are given. 
\end{abstract}

\begin{keyword}
Homotopy domain theory\sep Homotopy domain equation\sep Homotopy lambda model \sep Kan complex \sep Infinity groupoid \sep 
\MSC[2020] 03B70
\end{keyword}

\end{frontmatter}


\newtheorem{defin}{Definition}[section]
\newtheorem{teor}{Theorem}[section]
\newtheorem{corol}{Corollary}[section]
\newtheorem{prop}{Proposition}[section]
\newtheorem{rem}{Remark}[section]
\newtheorem{lem}{Lemma}[section]
\newtheorem{n}{Notation}[section]
\newtheorem{ejem}{Example}[section]

\section{Introduction}
\label{chap:capitulo1}

The purpose of this work is to give a follow-up on the project of generalisation of Dana Scott's Domain Theory \cite{DBLP:books/mk/Abramsky94} and \cite{DBLP:books/mk/Asperti91}, to a Homotopy Domain Theory, which began in \cite{MartinezHoDT}, in the sense of providing methods to find $\lambda$-models that allow raising the interpretation of equality of $\lambda$-terms (e.g., $\beta$-equality, $\eta$-equality, etc.) to a semantics of higher equalities.

\bigskip Dana Scott's Domain Theory \cite{DBLP:books/mk/Abramsky94}, was created in the 1960s and 1970s with the purpose of finding a semantics for programming languages, specifically for functional programming languages theorised by type-free $\lambda$-calculus. Where each computational program, which corresponds to a $\lambda$-term, is interpreted as a continuous function $f:D\rightarrow D$ defined on an ordered set (poset) $D$ called Domain, which would come a solution of the Recursive Domain Equation 
$$X\cong [X\rightarrow X],$$ 
hence $D$ is a model of type-free $\lambda$-calculus or also called $\lambda$-model, where $D_\infty$ is the first non-trivial $\lambda$-model introduced by Dana Scott. Thus, a computable construction can be more easily verified in its interpretation as a continuous function.

\bigskip In addition to the many recursively defined computational objects, type-free $\lambda$-calculus is a classic example of a programming language with recursive definitions of  data-types (a classification that specifies the type of value a variable),  since the expression $D\cong [D\rightarrow D],$ implies a recursive definition of the data-types, in the sense that $D$ is the limit of a non-decreasing sequence of posets $D_i\subseteq D$ recursively defined, where each $D_i$ interprets a type. Thus, a recursively defined computational object (e.g., the factorial function) can be seen as a recursive sequence of partial functions $f_i:D\rightarrow D$ which converges to a total function $f:D\rightarrow D$ being the interpretation of the $\lambda$-term which represents the computational object. 

\bigskip In the 1970s, $\lambda$-calculus was extended by Martin Löf to Intuitionistic Type Theory (ITT) with the purpose of formalising mathematical proofs to computer programmes, i.e., it emerged as a computational alternative to the fundamentals of mathematics. Later, in the 2010s, the Homotopy Type Theory (HoTT) was proposed as an interaction between ITT and Homotopy Theory, where we can affirm that HoTT is the most recent version of the theorization of the notion of an algorithm (in the sense of constructive mathematics seen as algorithms). Finally in \cite{Queiroz2016}, \cite{Ramos2017} and \cite{Thiago2023}, an alternative to HoTT based on computational paths; as a finite sequence of rewriting of computational terms according to the ITT axioms.

\bigskip Thus, as there is a semantics of type-free $\lambda$-calculus with Dana Scott's Domain Theory, we here want to propose a semantics of a untyped version of HoTT based on computational paths starting from a ``Domain Theory of Homotopy'', that is, we want to start the way for an interpretation of the most recent notion of algorithm given by HoTT, starting with the untyped case.

\bigskip The initiative to search for $\lambda$-models \cite{DBLP:books/mk/HindleyS08} with a structure of $\infty$-groupoid emerged in \cite{Martinez}, which studied the geometry of any complete partial order (c.p.o) (e.g., $D_\infty$), and found that the topology inherent in these models generated trivial higher-order groups. From that moment on, the need arose to look for a type of model that presented a rich geometric structure; where their higher-order fundamental groups would not collapse. 

\bigskip It is known in the literature that Dana Scott's Domain-Theory \cite{DBLP:books/mk/Abramsky94} and \cite{DBLP:books/mk/Asperti91}, provides general techniques for obtaining $\lambda$-models by solving domain equations over arbitrary  Cartesian closed categories. To fulfil the purpose of getting $\lambda$-models with non-trivial $\infty$-groupoid structure, the most natural way would be to adapt Dana Scott's Domain Theory to a ``Homotopy Domain Theory''. Where the Cartesian closed categories (c.c.c) will be replaced by Cartesian closed $\infty$-categories (c.c.i), the c.p.o's will be replaced by ``c.h.p.o's (complete homotopy partial orders)'',  the $0$-categories by $0$-$\infty$-categories and the isomorphism between objects in a Cartesian closed 0-categories (at the recursive domain equation) will be replaced by an equivalence between objects in a Cartesian closed $0$-$\infty$-category, which we call ``homotopy domain equation'' such as summarized in Table \ref{Table HoDT}. See \cite{Martinez21}.

	\begin{table}[ht]
		\caption{Generalization of Recursive Domain Equations} 
		\label{Table HoDT}
		\centering
		\begin{tabular}{p{6cm}p{5.5cm}}
			\hline
			\multicolumn{1}{c}{\textbf{Dana Scott's Domain-Theory}} & \multicolumn{1}{c}{\textbf{Homotopy Domain Theory}}  \\
			\hline     
			Cartesian Closed Category (c.c.c)  & Cartesian Closed $\infty$-Category (c.c.i) 
			\\ 
			
			Complete Partial Order (c.p.o) &  Complete Homotopy Partial Order (c.h.p.o)           
			\\ 
			0-category & $0$-$\infty$-category \\
			Recursive Domain Equation: 
			
			$X\cong (X\Rightarrow X)$	& Homotopy Domain Equation: 
			
			$X\simeq(X\Rightarrow X)$
			\\ \hline
		\end{tabular}
	\end{table}

Just as the recursive Domain Equation $X\cong [X\rightarrow X]$ (in the category of the c.p.o's) has an implicit recursive definition of the data-types, the ``Homotopy Domain Equation'' $X\simeq [X\rightarrow X]$ (in the ``$\infty$-category of the c.h.p.o's'') would also have a recursive definition of the data-types. A recursively defined computational object (e.g., a proof by mathematical induction) would being of a higher order relative to the classical case, whose interpretation would be recursively defined by a sequence of partial functors $F_i:K\rightarrow K$, over a  Kan complex $K$ weakly ordered, which converges to a total functor $F:K\rightarrow K$, whose details are not among the objectives of this paper, but will be developed in future works, when studying the semantics (case of inductive types) of the version of HoTT based on computational paths. \cite{Martinez2HoDTvsHoTT21}.

\bigskip Thus, intuitively, from the computational point of view, we have that a Kan complex, which satisfies the Homotopy Domain Equation, is not only capable of verifying the computability of constructions typical of classical programming languages, as $D_\infty$ does it, but also, it has the advantage (over $D_\infty$) of verifying the computability of higher constructions, such as a mathematical proof of some proposition, the proof of the equivalence between two proofs of the same proposition etc.

\bigskip To guarantee the existence of Kan complexes with good information, we define the ``non-trivial Kan complexes", which in intuitive terms, are those that have holes in all higher dimensions and also have holes in the locality or class of any vertex. We establish conditions to prove the existence of non-trivial Kan complexes which are reflexive, which we call ``non-trivial homotopy $\lambda$-models"  than the homotopic $\lambda$-models initially defined in \cite{Martinez}, and developed in \cite{MartinezHoDT}.  

\bigskip It should be clarified that the literature around Kan complexes is related to some computational theories, such as Homotopy Type Theory (HoTT) \cite{DBLP:books/mk/Univalent13}, so that to ensure the consistency of HoTT, Voevodsky \cite{Voevodsky} (see \cite{Lumsdaine} for higher inductive types) proved that HoTT has a model in the category of Kan complexes (see \cite{DBLP:books/mk/Univalent13}). 

\bigskip To meet the goal of getting homotopy $\lambda$-models, in Section \ref{Section: Complete Homotopy Partial Orders} we generalization the c.p.o to c.h.p.o (complete homotopy partial order), in Section \ref{HDE arbitrary cci}, we propose a method for solving homotopy domain equations on any c.c.i., namely: first, one solves the contravariant functor problem in a similar way to classical Domain Theory, and later one uses a version from fixed point theorem to find some solutions of this equation. Finally, in Section \ref{HDE on Kl(P)}, the methods of the previous section are applied in a particular c.c.i., and thus one ends up guaranteeing the existence of homotopy $\lambda$-models which, as previously mentioned, present an $\infty$-groupoid structure with relevant information in higher dimensions.

\section{Theorical Foundations}
\label{Section: Complete Homotopy Partial Orders}

In this section we reintroduce some concepts on $\infty$-category, Kan complex and complete homotopy partial order (c.h.p.o) as a direct generalization of the c.p.o's, where the sets are replaced  by Kan complexes and the order relations $\leq$ by weak order relations  $\precsim$ (\cite{MartinezHoDT} and \cite{Martinez2HoDTvsHoTT21}). 

\subsection{Definition of $\infty$-category and Kan complex}
\label{sub:Definiton-infty-categoy-Kan-complex}

A simplicial set $K$ is defined as a presheaf $\Delta^{op}\rightarrow Set$, with $\Delta$ being the \textit{simplicial indexing category}, whose objects are finite ordinals $[n]=\{0,1,\ldots,n\}$, and morphisms are the (non strictly) order preserving maps. $\Delta^n$ is the \textit{standard $n$-simplex} defined for each $n\geq 0$ as the simplicial set $\Delta^n:=\Delta(-,[n])$. And $\Lambda_i^n$ is a \textit{horn} defined as largest subobject of $\Delta^n$ that
does not include the face opposing the $i$-th vertex.

\begin{defin}[$\infty$-category \cite{DBLP:books/mk/Lurie}]
	An $\infty$-category is a simplicial set $X$ which has the following property: for any $0<i<n$, any map $f_0:\Lambda_i^n\rightarrow X$ admits an
	extension $f:\Delta^n\rightarrow X$.	
\end{defin}

\begin{defin}
	From the definition above, we have the following special cases:
	\begin{itemize}
		\item $X$ is a Kan complex if there is an extension for each $0\leq i\leq n$.
		\item $X$ is a category if the extension exists uniquely \cite{DBLP:books/mk/Rezk22}.
		\item $X$ is a groupoid if the extension exists for all $0\leq i\leq n$ and is unique \cite{DBLP:books/mk/Rezk22}. 
	\end{itemize}
\end{defin}

In other words,  a Kan complex is an $\infty$-groupoid; composed of objects, 1-morphisms, 2-morphisms, ...,  all those invertible.

\begin{defin}
A functor of $\infty$-categories $X\rightarrow Y$ is exactly a morphism of simplicial sets. Thus, $Fun(X,Y)=Map(X,Y)$ must be a simplicial set of the functors from $X$ to $Y$. 
\end{defin}

\begin{n}
The notation $Map(X,Y)$  is usually used for simplicial sets, while $Fun(X,Y)$ is for $\infty$-categories. One will refer to morphisms in $Fun(X, Y)$ as natural transformations of functors,
and equivalences in $Fun(X, Y)$ as natural equivalences. 

\bigskip The composition from n-simplex $f:\Delta^n\rightarrow X$ (or $f\in X_n$) with a functor $F:X\rightarrow Y$, will be denoted as the image $F(f)\in Y_n$, where $n\geq 0$.

\bigskip A 1-simplex $f:\Delta^1\rightarrow X$, such that $f(0)=x$ and $f(1)=y$ will be denoted as a morphism $f:x\rightarrow y$ in the $\infty$-category $X$.

\bigskip An inner horn $\Lambda_1^2\rightarrow X$, which corresponds to composable morphisms $x\xrightarrow{f}y\xrightarrow{g}z$
in the $\infty$-category $X$, will be denoted by $(g,-,f)$ or in some cases to simplify notation it will be denoted by $g.f$.
\end{n}

\begin{prop}[\cite{DBLP:books/mk/Lurie} and \cite{DBLP:books/mk/Cisinski}]
	For every $\infty$-category $Y$, the simplicial set $Fun(X, Y)$ is an $\infty$-category.
\end{prop}

\bigskip The proof of the following theorem can be found in \cite{DBLP:books/mk/Lurie} and \cite{DBLP:books/mk/Cisinski}.

\begin{teor}[Joyal]
	A simplicial set $X$ is an $\infty$-category if and only if the canonical morphism 
	$$Fun(\Delta^2,X)\rightarrow Fun(\Lambda_1^2,X),$$
	is a trivial fibration. Thus, each fibre of this morphism is contractible.
\end{teor}

The above theorem guarantees the laws of coherence of the composition of 1-simplexes or morphisms of an $\infty$-category. In the sense that the composition of morphisms is unique up to homotopy, i.e., the composition is well-defined up to a space of choices is contractible (equivalent to $\Delta^0$).

\begin{defin}[Space of morphisms \cite{DBLP:books/mk/Rezk17}]
For two vertices $x,y$ in an $\infty$-category $X$, define the space of morphisms $X(x,y)$ by the following pullback diagram	
	\[\xymatrix{
	& {X(x,y)}\ar[d]_{}\ar[r]^{} &
	Fun(\Delta^1,X) \ar[d]^{(s,t)}\\
	& \Delta^0\ar[r]_{(x,y)} & X\times X
	& 
}\]
in the category $sSet$.
\end{defin}

\begin{prop}[\cite{DBLP:books/mk/Lurie} and \cite{DBLP:books/mk/Rezk17}]
	The morphism spaces $X(x,y)$ are Kan complexes.
\end{prop}

\subsection{Natural transformations and natural equivalence}

In category theory, we have the concept of isomorphism of objects. For the case of the $\infty$-categories, we will have the equivalence of objects (vertices) in the following sense.  

\begin{defin}[Equivalent vertices]
A morphism (1-simplex) $f:x\rightarrow y$ in an $\infty$-category $X$ is invertible (an equivalence) if there is morphism $g:y\rightarrow x$ in $X$, a pair of 2-simplexes $\alpha,\beta\in X_2$ such that 
$(g,-,f)\xrightarrow{\alpha}1_x$ and $(f,-,g)\xrightarrow{\beta}1_y$, i.e., we can fill out following diagram
 	\[\xymatrix{
 	& {x}\ar[dd]_{1_x}\ar[rr]^f& &
 	y \ar[dd]^{1_y}\ar@{-->}[lldd]^g\\ \\
 	& x\ar[rr]_{f}& & y
 	& 
 }\]
with the 2-simplexes $\alpha$ and $\beta$.	
\end{defin}

\begin{teor}[\cite{DBLP:books/mk/Lurie} and \cite{DBLP:books/mk/Cisinski}]
	Let $X$ be an $\infty$-category. The following are equivalent
	\begin{enumerate}
		\item Every morphism (1-simplex) in $X$ is an equivalence.
		\item  $X$ is a Kan complex.
	\end{enumerate}
\end{teor}

\subsection{Natural transformations and natural equivalence}

\begin{defin}[\cite{DBLP:books/mk/Cisinski} and \cite{DBLP:books/mk/Rezk17}]
	If $X$ and $Y$ are $\infty$-categories, and if $F, G : X\rightarrow Y$ are two functors, a natural
	transformation from $F$ to $G$ is a map $H : X\times \Delta^1\rightarrow Y$ such that
	$$H(x,0)=F(x),\hspace{0.5cm}H(x,1)=G(x),$$
	for each vertex $x\in X$. Such a natural transformation is invertible or it is a natural equivalence if for any vertex $x\in X$, the
	induced morphism $F(x)\rightarrow G(x)$ (corresponding to the restriction of $H$ to
	$\Delta^1 \cong \{x\}\times\Delta^1$) is invertible in $Y$. If there is a natural equivalence from $F$ to $G$, we write $F\simeq G$.
\end{defin}

\begin{rem}
This means that for each vertex $x\in X$, one chooses a morphism $H_x:F(x)\rightarrow G(x)$ such that the following diagram 
 	\[\xymatrix{
	& {F(x)}\ar[dd]_{F(f)}\ar[rr]^{H_x}\ar@{-->}[rrdd]^g& &
	G(x) \ar[dd]^{G(f)}\\ \\
	& F(x')\ar[rr]_{H_{x'}}& & G(x')
	& 
}\]
commutes under some 2-simplexes $\alpha:g\rightarrow(G(f),-,H_x)$ and $\beta:(H_{x'},-,F(f))\rightarrow g$.
\end{rem}

\subsection{Categorical equivalences and homotopy equivalences}

\begin{defin}[Categorical equivalence \cite{DBLP:books/mk/Rezk17} and \cite{DBLP:books/mk/Lurie}]
A functor of $\infty$-categories $F:X\rightarrow Y$ is a categorical equivalence if there is another functor $G:Y\rightarrow X$, such that $GF\simeq 1_X$ and $FG\simeq 1_Y$.	
\end{defin}

\begin{rem}
	From the definition above, if  $F:X\rightarrow Y$ is a functor of Kan complexes, we say that $F$ is a homotopy equivalence.
\end{rem}

\begin{lem}[\cite{DBLP:books/mk/Rezk17} and \cite{DBLP:books/mk/Cisinski}]
	A functor of $\infty$-categories $F:X\rightarrow Y$ is a categorical equivalence
	if it satisfies the following two conditions:
	\begin{itemize}
		\item Fully Faithful (Embedding): For two objects $x,y\in X$ the induced functor of Kan complexes
		$$X(x,y)\rightarrow Y(Fx,Fy),$$
		is a homotopy equivalence.
		\item Essentially Surjective: For every object $y\in Y$ there exists an object $x\in Y$ such that $Fx$ is equivalent to $y$.
	\end{itemize}
\end{lem}

\subsection{The join of $\infty$-categories}
\label{sub:section3}

Next, the extension from join of categories to $\infty$-categories. This will make us able to define limit and colimit in an $\infty$-category.

\begin{defin}[Join \cite{DBLP:books/mk/Lurie}]
	Let $K$ and $L$ be simplicial sets. The join $K\star L$ is the simplicial set defined by
	$$(K\star L)_n:=K_n\cup L_n\cup\bigcup_{i+1+j=n}K_i\times L_j, \hspace{0.5cm}n\geq 0.$$ 
\end{defin}

\begin{ejem}\cite{DBLP:books/mk/Groth15}
	\begin{enumerate}
		\item If $K\in sSet$ and $L=\Delta^0$, then $K^\vartriangleright=K\star\Delta^0$ is the cocone or the right cone on $K$. Dually,  If $L\in sSet$ then $L^\vartriangleleft=\Delta^0\star L$ is the cone or the left cone on $L$.
		\item Let $K=\Lambda_0^2$. If we see this left horn as a pushout, the cocone $(\Lambda_0^2)^\vartriangleright$ is isomorphic to the square $\square=\Delta^1\times\Delta^1$, that is, to the filled in diagram
		 	\[\xymatrix{
			& {(0,0)}\ar[dd]_{}\ar[rr]^{}\ar[rrdd]& &
			(1,0) \ar[dd]^{}\\ \\
			& (0,1)\ar[rr]_{}& & (1,1)
			& 
		}\] 
	\end{enumerate}
\end{ejem}

\begin{prop}\cite{DBLP:books/mk/Lurie}
	\begin{enumerate}
		\item[(i)] For the standard simplexes one has an isomorphism $\Delta^i\star\Delta^j\cong \Delta^{i+1+j}$, $i,j\geq 0$, and these isomorphisms are with the obvious inclusions of $\Delta^i$ and $\Delta^j$.
		\item[(ii)] If $X$ and $Y$ are $\infty$-categories, then the join $X\star Y$ is an $\infty$-category.
	\end{enumerate}
\end{prop}

\subsection{The slice $\infty$-category}

In the case of the classical categories, if $A,B$ are categories and $p:A\rightarrow B$ is any functor, one can form the slice category $B_{/p}$ of the object over $p$ or cones on $p$. The following propositions allow us to define the slice $\infty$-category.   

\begin{prop}[\cite{Joyal2002}]
Let $K$ and $S$ be simplicial sets, and $p:K\rightarrow S$ be an arbitrary map. There is a simplicial set $S_{/p}$ such that there exists a natural bijection 
$$sSet(Y,S_{/p})\cong sSet_p(Y\star K,S),$$
where the subscript on right hand side indicates that we consider only those morphisms $f:Y\star K\rightarrow S$ such that $f|K=p$.
\end{prop}

\begin{prop}[Joyal]
	Let $X$ be an $\infty$-category and $K$ be a simplicial set. If $p:K\rightarrow X$ be a map of simplicial sets, then $X_{/p}$ is an $\infty$-category. Moreover, if $q:X\rightarrow Y$ is a categorical equivalence, then the induced map $X_{/p}\rightarrow Y_{/qp}$ is a categorical equivalence as well.  
\end{prop}

\begin{defin}[Slice $\infty$-Categorical \cite{DBLP:books/mk/Lurie}]
		Given $X$ be an $\infty$-category, $K$ be a simplicial set and $p:K\rightarrow X$ be a map of simplicial sets. Define the slice $\infty$-category $X_{/p}$ of the objects over $p$ or cones on $p$. Dually, $X_{p/}$ is the $\infty$-category of objects under $p$ or cocones on $p$.
\end{defin}

\begin{ejem}
	Let $X$ be an $\infty$-category and $x\in X$ be an object, which correspond to map  $x:\Delta^0\rightarrow X$. The objects of the $\infty$-category $X_{/x}$ of cones on $x$ are morphism $y\rightarrow x$ in $X$, and the morphism from $y\rightarrow x$ to $z\rightarrow x$ in $X_{/x}$, are the 2-simplexes 
			 	\[\xymatrix{
		& {y}\ar[dd]_{}\ar[rrdd]& &
		 \\ \\
		& z\ar[rr]_{}& & x
		& 
	}\]
in the $\infty$-category $X$. 
\end{ejem}

\subsection{Limits and colimits}

An object $t$ of a category is final if for each object $x$ in this category, there is a unique morphism $x\rightarrow t$. Next, one defines the final objects at $\infty$-categories, under a contractible space of morphisms.

\begin{defin}[Final object \cite{DBLP:books/mk/Lurie}]
An object $\omega\in X$ in an $\infty$-category $X$, is a final object if for any object $x\in X$, the Kan complex of morphisms $X(x,\omega)$ is contractible.
\end{defin}

\begin{corol}[\cite{DBLP:books/mk/Cisinski}]\label{corolally-final-objects}
	The final objects of an $\infty$-category $X$ form a Kan complex which is either empty or equivalent to the point. 
\end{corol}

\begin{corol}[\cite{DBLP:books/mk/Cisinski}]
	Let $x$ be a final object in an $\infty$-category $X$. For any simplicial set $A$, the constant map $A\rightarrow X$ with value $x$ is a final object in $Map(A,X)$.  
\end{corol}

\begin{defin}[Limit and colimit \cite{Joyal2002}]
	Let $X$ be an $\infty$-category and let $p:K\rightarrow X$ a map of simplicial sets. A colimit for $p$ is an initial object of $X_{p/}$, and a limit for $p$ is a final object of $X_{/p}$. 
\end{defin}

By the dual of Corollary \ref{corolally-final-objects}, if the colimit exists, then the Kan complex of initial objects is contractible, i.e., the initial object is unique up to contractible choice. 

\subsection{$\infty$-categories of presheaves}   

\begin{defin}[$\infty$-categories of presheaves \cite{DBLP:books/mk/Lurie}]
	Let $S$ be a simplicial set. One lets $P(S)$ or $PS$ denote simplicial set $Fun(S^{op},\mathscr{S})$; where $\mathscr{S}$ denotes the $\infty$-category of the small Kan complexes or $\infty$-groupoids, also called the $\infty$-category of spaces. One will say that $P(S)$ is the $\infty$-category of the presheaves on $S$.
\end{defin}

\begin{prop}[\cite{DBLP:books/mk/Lurie}]
Let $S$ be a simplicial set. The $\infty$-category $P(S)$ of the presheaves on $S$ admits all small limits and colimits.
\end{prop}

\begin{prop}[$\infty$-Categorical Yoneda Lemma \cite{DBLP:books/mk/Lurie}]
	Let $S$ be a simplicial set. Then the Yoneda embedding $j:S\rightarrow PS$ is fully faithful. 
\end{prop}

\begin{n}
	Let $X$ be an $\infty$-category and $S$ be a simplicial set. One lets $Fun^L(PS,X)$ denote the full subcategory of $Fun(PS,X)$ spanning those functors $PS\rightarrow X$ which preserve small colimits. 
	
\bigskip The motivation for this notation stems from Adjoint Functor Theorem (will be seen later), where $Fun^L(PS,X)$ also denotes the full subcategory of $Fun(PS,X)$ spanning those functors  which are left adjoints. 
\end{n}

\begin{teor}[\cite{DBLP:books/mk/Lurie}]
Let $S$ be a small simplicial set and let $X$ be an $\infty$-category which admits small colimits. The composition with the Yoneda embedding $j:S\rightarrow PS$ induces an equivalence of $\infty$-categories
$$Fun^L(PS,X)\rightarrow Fun(S,X).$$
\end{teor}

\subsection{Adjoint Functors}

Next is the definition of adjoint functors; motivated from classical definition of functors between categories.

\begin{defin}[Adjunction \cite{DBLP:books/mk/Cisinski}]
	Let $F:X\rightarrow Y$ and $G:Y\rightarrow X$ be functors between $\infty$-categories. One will say that $(F,G)$ form an adjoint pair, or that $F$ is the left adjoint of $G$, or that $G$ is the right adjoint of $F$, if there exists a functorial invertible map of the form
	$$\alpha_{x,y}:X(x,Gy)\rightarrow Y(Fx,y)$$
	in the $\infty$-category $\hat{\mathscr{S}}$ of all Kan complexes (not necessarily small), where the word functorial means that this map is the evaluation at $(x,y)$ of a morphism $\alpha$ in the $\infty$-category of the functors $Fun(X^{op}\times Y,\hat{\mathscr{S}})$. 
	
	\bigskip An adjunction from $A$ to $B$ is a triple $(F,G,\alpha)$, where $F$ and $G$ are the functors as above, while $\alpha$ is an invertible map from $X(-,G(-))$ to $Y(F(-),-)$ which exhibits $G$ as a right adjoint of $F$.
\end{defin}

\begin{prop}[\cite{DBLP:books/mk/Cisinski}]
	Let $F:X\rightarrow Y$ be a functor between small $\infty$-categories. For two adjunctions $(F,G,\alpha)$ and $(F,G',\alpha')$, the functorial maps $Gy\rightarrow G'y$ which are compatibles with $\alpha$ and $\alpha'$ form a contractible Kan complex (i.e., there is unique way to identify $G$ with $G'$ as right adjoint of $F$).
\end{prop}

\begin{prop}[\cite{DBLP:books/mk/Lurie}]
		Let $F:X\rightarrow Y$ and $G:Y\rightarrow X$ be functors between $\infty$-categories. The following conditions are equivalent:
		\begin{enumerate}
			\item The functor $F$ is a left adjoint to $G$.
			\item There is a unit transformation $u:id_X\rightarrow G\circ F$
		\end{enumerate}
\end{prop}

\begin{prop}[\cite{DBLP:books/mk/Lurie}]
	Let $F:X\rightarrow Y$ be a functor between $\infty$-categories which has a right adjoint $G:Y\rightarrow X$. Then $F$ preserves all the colimits which exist in $X$, and $G$ preserves all the limits which exist in $Y$.  
\end{prop}

\begin{defin}[\cite{DBLP:books/mk/Lurie}]
	Let $X$ and $Y$ be $\infty$-categories. One lets $Fun^L(X,Y)\subseteq Fun(X,Y)$ denote the full subcategory of $Fun(X,Y)$ spanned by those functors $F:X\rightarrow Y$ which are left adjoints. Similarly, one will define $Fun^R(X,Y)$ to be the full subcategory of $Fun(X,Y)$ spanned by those functors which are right adjoints. 
\end{defin}

\begin{prop}[\cite{DBLP:books/mk/Lurie}]
Let $X$ and $Y$ be $\infty$-categories. Then the $\infty$-categories $Fun^L(X,Y)$ and $Fun^R(X,Y)^{op}$ are (canonically) equivalent. 
\end{prop}

\subsection{Filtered $\infty$-Categories and Compact Objects}

Recall that the filtered categories are the generalization of the partially ordered set $A$ which are filtered, i.e., those that satisfy the condition: every finite subset of $A$ has an upper bound in $A$. Next is the definition of filtered $\infty$-category which generalizes the  classical version to categories.   

\begin{defin}[$\kappa$-Filtered \cite{DBLP:books/mk/Lurie}]
	Let $\kappa$ be a regular cardinal and let $X$ be an $\infty$-category. We say that $X$ is $\kappa$-filtered if, for every $\kappa$-small set simplicial $S$ and all map $f:S\rightarrow X$ there is a map $\overline{f}:S^\rhd\rightarrow X$ (with $S^\rhd=S\star\Delta^0$) which extends to $f$. (i.e., $X$ is filtered if this the extension property with respect to the inclusion $S\subseteq S^\rhd$ for every $\kappa$-small simplicial set $S$). One will say that $X$ is filtered if this is $\omega$-filtered.
\end{defin}

\begin{defin}[Weakly contractible]
	A $\infty$-category $X$ is weakly contractible if its geometric realization $|X|\in Top$ is contractible. 
\end{defin}

\begin{prop}[\cite{DBLP:books/mk/Lurie}]
	Let $X$ be a filtered $\infty$-category. Then $X$ is weakly contractible.
\end{prop}

\begin{defin}[\cite{DBLP:books/mk/Lurie}]
Let $\kappa$ be a regular cardinal and $X$ be an $\infty$-category.	One will say that $X$ admits  $\kappa$-filtered colimits if it admits colimits of all indexed diagrams on any $\kappa$-filtered $\infty$-category. 
\end{defin}

\begin{defin}[Continuity \cite{DBLP:books/mk/Lurie}]
	Let $X$ be an $\infty$-category which admits $\kappa$-filtered colimits. One will say a functor $f:X\rightarrow Y$ is $\kappa$-continuous if it preserves $\kappa$-filtered colimits.
\end{defin}

\begin{defin}[$\kappa$-Compact \cite{DBLP:books/mk/Lurie}]
Let $X$ be an $\infty$-category, $x\in X$ and let $X(x,-):X\rightarrow\hat{\mathscr{S}}$ be the corepresented by $x$. If $X$ admits $\kappa$-filtered colimits, then one will say that $x$ is $\kappa$-compact if the functor corepresented by it is $\kappa$-continuous. One will say that $x$ is compact if it is $\omega$-compact (and $X$ admits filtered colimits).  
\end{defin}

\begin{rem}
	Let $X$ be an accessible $\infty$-category  and $\kappa$ be a regular cardinal. The full subcategory $X^\kappa\subseteq X$ consisting of all the $\kappa$-compact objects of $X$ is essentially small, i.e., there exists a small $\infty$-category $Y$ equivalent to $X^\kappa$.
\end{rem}

\begin{n}
	Let $X$ be an $\infty$-category and $\kappa$ be regular  cardinal. Denote by $Ind_\kappa(X)$ the closure of $X$ under $\kappa$-filtered colimits.
\end{n}

\begin{prop}[\cite{DBLP:books/mk/Lurie}]
	Let $X$ be a small $\infty$-category and $\kappa$ a regular cardinal. The $\infty$-category $P^\kappa(X)$ of $\kappa$-compact objects of $P(X)$ is essentially small:
	that is, there exists a small $\infty$-category $Y$ and an equivalence $i : Y\rightarrow P^\kappa(X)$.
	Let $F : Ind_\kappa(Y)\rightarrow P(X)$ be a $\kappa$-continuous functor such that the composition of $F$ with the Yoneda embedding
	$$Y\rightarrow Ind_\kappa(Y)\rightarrow P(X)$$
	is equivalent to $i$. Then $F$ is an equivalence of $\infty$-categories.
\end{prop}

\subsection{Accessible $\infty$-category}

In this section, we define everything related to accessibility, necessary to define presentable $\infty$-categories.

\begin{defin}[Accessible $\infty$-Category \cite{DBLP:books/mk/Lurie}]
Let $\kappa$ be a regular cardinal. An $\infty$-category $X$ is $\kappa$-accessible if there exists a small $\infty$-category $X^0$ and an equivalence
$$Ind_\kappa(X^0)\rightarrow X.$$
One will say that $X$ is accessible if it is $\kappa$-accessible for some regular cardinal
$\kappa$.
\end{defin}

\begin{defin}[Accessible Functor \cite{DBLP:books/mk/Lurie}]
Let $X$ be an accessible $\infty$-category, then a functor $F :X\rightarrow X'$ is accessible if it is $\kappa$-continuous for some regular cardinal $\kappa$.	
\end{defin}

\begin{ejem}
The $\infty$-category $\mathscr{S}$ of spaces is accessible. More generally,
for any small $\infty$-category $X$, the $\infty$-category $P(X)$ is accessible.
\end{ejem}

\begin{n}
	Denote by $Cat_\infty$ the $\infty$-category of all small $\infty$-categories, and by $CAT_\infty$ the $\infty$-category of the all the $\infty$-categories.
\end{n}

\begin{defin}[\cite{DBLP:books/mk/Lurie}]
	Let $\kappa$ be a regular cardinal. We let $Acc_\kappa\subseteq CAT_\infty$
	denote the subcategory defined as follows:
	\begin{enumerate}
		\item The objects of $Acc_\kappa$ are the $\kappa$-accessible $\infty$-categories.
		\item A functor $F : X\rightarrow Y$ between accessible $\infty
		$-categories belongs to $Acc_\kappa$
		if and only if $F$ is $\kappa$-continuous and preserves $\kappa$-compact objects.
	\end{enumerate}
Let $Acc = \bigcup_\kappa Acc_\kappa$ . We will refer to $Acc$ as the $\infty$-category of accessible $\infty$-categories.
\end{defin}

\subsection{Presentable $\infty$-Categories}

Next, we define presentable $\infty$-categories and their characterizations.

\begin{defin}[Presentable $\infty$-category \cite{DBLP:books/mk/Lurie}]
	An $\infty$-category $X$ is presentable if $X$ is accessible and admits small colimits.
\end{defin}

\begin{teor}[\cite{Simpson99} and \cite{DBLP:books/mk/Lurie}]
	Let $X$ be an $\infty$-category. The following
	conditions are equivalent:
	\begin{enumerate}
		\item The $\infty$-category $X$ is presentable.
		\item The $\infty$-category $X$ is accessible, and for every regular cardinal $\kappa$ the full subcategory $X^\kappa$ admits $\kappa$-small colimits.
		\item There exists a regular cardinal $\kappa$ such that $X$ is $\kappa$-accessible and $X^\kappa$ admits $\kappa$-small colimits.
		\item There exists a regular cardinal $\kappa$, a small $\infty$-category $Y$ which admits $\kappa$-small colimits, and an equivalence $Ind_\kappa X\rightarrow Y$.
	\end{enumerate}
\end{teor}

\begin{prop}[\cite{DBLP:books/mk/Lurie}]
Let $F : X\rightarrow Y$ be a functor between presentable $\infty$-categories. Suppose that $X$ is $\kappa$-accessible. The following conditions are
equivalent:	
\begin{enumerate}
	\item The functor $F$ preserves small colimits.
	\item The functor $F$ is $\kappa$-continuous, and the restriction $ F|X^\kappa$ preserves $\kappa$-small colimits.
\end{enumerate}
\end{prop}

\begin{teor}[Adjoint Functor Theorem \cite{DBLP:books/mk/Lurie}] 
	Let $F : X\rightarrow Y$ be a functor
	between presentable $\infty$-categories.
	\begin{enumerate}
		\item The functor $F$ has a right adjoint if and only  it preserves small
		colimits.
		\item 	The functor $F$ has a left adjoint if and only if it is accessible and
		preserves small limits.
	\end{enumerate}
\end{teor}

\begin{defin}
	Define the subcategories $\mathscr{P}r^L,\mathscr{P}r^R\subseteq CAT_\infty$ as follows:
	\begin{enumerate}
		\item The objects of both $\mathscr{P}r^L$ and $\mathscr{P}r^R$ are the presentable $\infty$-categories.
		\item A functor $F : X\rightarrow Y$ between presentable $\infty$-categories is a morphism in $\mathscr{P}r^L$ if and only if $F$ preserves small colimits. Hence $F\in Fun^L(X,Y)$.
		\item A functor $G : X\rightarrow Y$ between presentable $\infty$-categories is a morphism in $\mathscr{P}r^R$ if and only if $G$ is accessible and preserves small limits. Thus $G\in Fun^R(X,Y)$.
	\end{enumerate}
\end{defin}

\begin{prop}[\cite{DBLP:books/mk/Lurie}]
The $\infty$-categories $\mathscr{P}r^L$ and $\mathscr{P}r^R$ admits all small limits, and the inclusion functors $\mathscr{P}r^L,\mathscr{P}r^R\subseteq CAT_\infty$ preserve all small limits. 	
\end{prop}

\begin{prop}[\cite{DBLP:books/mk/Lurie}]\label{prop-closed-colimits}
	Let $X$ be an presentable $\infty$-category and let $S$ be a small simplicial set. Then $Fun(S, X)$ and $Fun^L(S, X)$ are presentable.
\end{prop}

\begin{rem}[\cite{DBLP:books/mk/Lurie}]
	The presentable $\infty$-category $Fun^L(X, Y)$ can be regarded as an internal mapping object in $Pr^L$. That is, there exists an operation $\otimes$ that endows $Pr^L$ with the structure of a symmetric monoidal $\infty$-category. Proposition \ref{prop-closed-colimits} can be interpreted as asserting that this monoidal structure is closed.
\end{rem}

\subsection{Compactly Generated $\infty$-Categories}

\begin{defin}[\cite{DBLP:books/mk/Lurie}]
	Let $\kappa$ be a regular cardinal. We will say that an $\infty$-category $X$ is $\kappa$-compactly generated if it is presentable and $\kappa$-accessible.
	When $\kappa = \omega$, we will simply say that $X$ is compactly generated.
\end{defin}

\begin{prop}[\cite{DBLP:books/mk/Lurie}]
		Let $\kappa$ be a regular cardinal and let $F:X \rightleftarrows Y: G$ be a
		pair of adjoint functors, where $X$ and $Y$ admit small $\kappa$-filtered colimits and these are $\kappa$-compactly generated.
		\begin{enumerate}
			\item If $G$ is $\kappa$-continuous, then $F$ carries $\kappa$-compact objects of $X$ to $\kappa$-
			compact objects of $Y$.
			\item Conversely, if $X$ is $\kappa$-accessible and $F$ preserves $\kappa$-compactness, then
			$G$ is $\kappa$-continuous.
		\end{enumerate}
\end{prop}

\begin{defin}
	If $\kappa$ is a regular cardinal, define the subcategories $\mathscr{P}r^L_\kappa,\mathscr{P}r^R_\kappa\subseteq CAT_\infty$ as follows
	\begin{enumerate}
		\item The objects of both $\mathscr{P}r^L_\kappa$ and $\mathscr{P}r^R_\kappa$ are $\kappa$-compactly generated $\infty$-categories.
		\item The morphisms in $\mathscr{P}r^R_\kappa$ are $\kappa$-continuous limit-preserving functors.
		\item The morphisms in $\mathscr{P}r^L_\kappa$ are functors which preserve small colimits and $\kappa$-
		compact objects. 
	\end{enumerate} 
\end{defin}

\begin{prop}
	The $\infty$-category $\mathscr{P}r^R_\kappa$ admits all the limits and the inclusion $\mathscr{P}r^L_\kappa\subseteq CAT_\infty$ preserves small limits. 
\end{prop}

\begin{defin}[\cite{DBLP:books/mk/Lurie}]
	Define the subcategory $CAT^{Rex(\kappa)}_\infty$ whose objects are (not necessarily small) $\infty$-categories which admit $\kappa$-small colimits and whose morphisms are functors which preserve $\kappa$-small colimits, and let $Cat^{Rex(\kappa)}_\infty=CAT^{Rex(\kappa)}_\infty\cap Cat_\infty$. 
\end{defin}

\begin{prop}[\cite{DBLP:books/mk/Lurie}]
	Let $\kappa$ be a regular cardinal and let
	$$\theta:\mathscr{P}r^L_\kappa\rightarrow CAT^{Rex(\kappa)}_\infty$$
	be the functor which associates to a $\kappa$-compactly generated $\infty$-category $X$ the full subcategory $X^\kappa\subseteq X$ spanned by the $\kappa$-compact
	objects of $X$. Then the functor $\theta$ is fully faithful.
\end{prop}

\subsection{Kleisli $\infty$-categories} 

Next, the definition of Kleisli structures on the $\infty$-categories \cite{MartinezHoDT}; a general and direct version of those initially introduced by \cite{Hyland14} for the case of bicategories.

\begin{defin}[Kleisli structure \cite{MartinezHoDT}]\label{Kleisli-strucutre}
	Let $\mathcal{K}$ be an $\infty$-category and $\mathcal{A}$ be an $\infty$-subcategory of $\mathcal{K}$. A Kleisli structure $P$ on $\mathcal{A}\subseteq \mathcal{K}$ is the following.
	\begin{itemize}
		\item For each vertex $a\in\mathcal{A}$ an arrow $y_a:a\rightarrow Pa$ in $\mathcal{K}$.
		\item For each $a,b \in\mathcal{A}$ a functor
		$$\mathcal{K}(a,Pb)\rightarrow\mathcal{K}(Pa,Pb),\hspace{0.5cm}f\mapsto f^{\#}.$$
		\item A subcategory $\mathcal{K}^{L}\subseteq\mathcal{K}$ which sets, for all the vertices $a,b\in\mathcal{A}$, the homotopy equivalence
		$$\xymatrix{
			\mathcal{K}(a,Pb)\ar@<1ex>[r]^{(-)^\#} & \mathcal{K}^L(Pa,Pb).\ar@<1ex>[l]^{(-)y_a}
		}$$
		
		such for each horn $(g^\#,-,f^\#):\Lambda_1^2\rightarrow\mathcal{K}^L$  one has the equality of fibres 
		$$F_{g^\#,f^\#}^L=F_{g^\#,f^\#},$$
		where $F_{g^\#,f^\#}^L$ and $F_{g^\#,f^\#}$ are the fibres of the canonical maps $Fun(\Delta^2,\mathcal{K}^L)\rightarrow Fun(\Lambda^2_1,\mathcal{K}^L)$ and $Fun(\Delta^2,\mathcal{K})\rightarrow Fun(\Lambda^2_1,\mathcal{K})$ respectively, with $f:a\rightarrow Pb$ and $g:b\rightarrow Pc$ be edges in $\mathcal{K}$.

	\end{itemize}   
\end{defin}

It is clear that $P$ is a functor from  $\mathcal{A}$ to $\mathcal{K}$ such that for each 1-simplex $f:a\rightarrow b$ of $\mathcal{A}$,  sets $Pf=(y_bf)^{\#}:Pa\rightarrow Pb$.

\begin{ejem}
	The functor $P:Cat_\infty\rightarrow CAT_\infty$, given by $PA=[A^{op},\mathscr{S}]$, is a Kleisli structure on $Cat_\infty\subseteq CAT_\infty$, where $\mathcal{K}^L=\mathscr{P}r^L.$
	
	\medskip We have the categorical equivalence 
	$$\xymatrix{
		Fun(A,PB)\ar@<1ex>[r]^{(-)^\#} & Fun^L(PA,PB),\ar@<1ex>[l]^{(-)y_A}
	}$$
	where $Fun^L(PA,PB)$ is the $\infty$-category of functors which preserve small colimits. Restricting
	to the subcategory $\mathscr{P}r^L\subseteq CAT_\infty$, whose objects are presentable $\infty$-categories and morphisms are functors which preserve small colimits, then the categorical equivalence is restricted to the homotopy equivalence  
	$$\xymatrix{
		CAT_\infty(A,PB)\ar@<1ex>[r]^{(-)^\#} & \mathscr{P}r^L(PA,PB).\ar@<1ex>[l]^{(-)y_A}
	}$$

\end{ejem}

\begin{defin}[Kleisli $\infty$-category \cite{MartinezHoDT}]
	Given a Kleisli structure $P$ on $\mathcal{A}\subseteq \mathcal{K}$. We define $Kl(P)$ as the simplicial set embedded in $\mathcal{K}^L$, where the objects  of $Kl(P)$ are the objects of $\mathcal{A}$, the morphism spaces is defined by $$Kl(P)(a,b):=\mathcal{K}(a,Pb).$$
	for the all objects $a,b\in\mathcal{A}$. The embedding $Kl(P)\hookrightarrow \mathcal{K}^L$ is induced by definition of $Kl(P)(a,b)$ and the homotopy equivalence $\mathcal{K}(a,Pb)\rightarrow\mathcal{K}^L(Pa,Pb)$ of the third item in the Definition \ref{Kleisli-strucutre}. Thus, all the n-simplexes in $Kl(P)$ are defined by the inverse image (of the embedding) of the n-simplexes in $\mathcal{K}^L$. 
\end{defin}

\begin{rem}
	
By Definition \ref{Kleisli-strucutre}, the homotopy equivalence $Kl(P)(a,b)\rightarrow\mathcal{K}^L(Pa,Pb)$ gets to establish that $Kl(P)$ is the $\infty$-category embedded in $\mathcal{K}^L$. Another interesting way would be to define $Kl(P)$ as a weighted colimit or the pushout of diagram $$(Cat_\infty\subseteq CAT_\infty\xleftarrow{P}Cat_\infty)$$
	in the category of simplicial sets $Set_{\Delta}=Fun(\Delta^{op},Set)$ (complete and cocomplete) and so $Kl(P)$ would be an $\infty$-category.
\end{rem}

\begin{prop}[\cite{MartinezHoDT}]
	$P(A\times B)\simeq PA\otimes PB$ in the $\infty$-category $\mathscr{P}r^L$.
\end{prop}

\subsection{Complete homotopy partial order (c.h.p.o)}

Finally, to close this section on fundamentals, we present the definitions of homotopy partial order, and complete homotopy partial order.

\begin{defin}[h.p.o. \cite{MartinezHoDT} and \cite{Martinez21}]\label{h.p.o}
	Let $\hat{K}$ be an $\infty$-category. The largest Kan complex $K\subseteq\hat{K}$ is a homotopy partial order (h.p.o), if for every $x,y\in K$ one has that $\hat{K}(x,y)$ is contractible or empty. The Kan complex $K$ admits a relation of h.p.o $\precsim$ defined for each $x,y\in K$ as follows:
	$x\precsim y$ if $\hat{K}(x,y)\neq\emptyset$, thus, the pair $(K,\precsim)$ is a h.p.o. (we denote only $K$). 
\end{defin}

\begin{defin}[c.h.p.o. \cite{MartinezHoDT} and \cite{Martinez21}] 
	Let $K$ be a h.p.o.
	\begin{enumerate}
		\item A h.p.o $X\subseteq K$ is directed if $X\neq \emptyset$ and for each $x,y\in X$, there exists $z\in X$ such that $x\precsim z$ and $y\precsim z$.
		\item $K$ is a complete homotopy partial order (c.h.p.o) if
		\begin{enumerate}
			\item There are initial objects, i.e.,  $0\in K$ is an initial object if for each $x\in K$, $0\precsim x$. 
			\item For each directed $X\subseteq\mathcal{K}$ the supremum (or colimit) $\bigcurlyvee X\in\mathcal{K}$ exists. 
		\end{enumerate}
	\end{enumerate}
\end{defin}

\begin{rem}\label{Remark}
	If $K$ is a c.h.p.o, note that $\hat{K}$ is weakly contractible (i.e., its geometric realization $|\hat{K}|$ is contractible) but $K$ not necessarily is contractible: by example if $\hat{K} = \langle Sing(|\partial\Delta^n|)\cup \{0\}\rangle$ is the smallest $\infty$-category that contains to the Kan complex $K=Sing(|\partial\Delta^n|)\cup \{0\}$ such that $0$ be the initial object unique of $\hat{K}$, where $Sing: Top \rightarrow Kan$  be the singular functor from category of  topological spaces to the category of Kan complexes and $|\,\,|:Kan \rightarrow Top$ be the geometric realization. Since, $Sing(|\partial\Delta^n|)$ is isomorphic to the sphere $S^{n-1}$, we denoted to $Sing(|\partial\Delta^n|)$ as $S^{n-1}$. 
\end{rem}

\section{Homotopy Domain Equation on an arbitrary Cartesian closed $\infty$-category}\label{HDE arbitrary cci}

This section is a direct generalization of the traditional methods for solving domain equations in Cartesian closed categories (see \cite{DBLP:books/mk/Asperti91} and \cite{DBLP:books/mk/Abramsky94}), in the sense of obtaining solutions for certain types of equations, which we call Homotopy Domain Equations in any Cartesian closed $\infty$-category.

\begin{defin}\cite{DBLP:books/mk/Asperti91}
	\begin{enumerate}
		\item An $\omega$-diagram in an $\infty$-category $\mathcal{K}$ is a diagram with the following structure:
		$$K_0 \stackrel{f_0}{\longrightarrow} K_1\stackrel{f_1}{\longrightarrow} K_2\longrightarrow\cdots \longrightarrow K_n \stackrel{f_n}{\longrightarrow}K_{n+1}\longrightarrow\cdots$$
		(dually, one defines $\omega^{op}$-diagrams by just reversing the arrows).
		\item An $\infty$-category $\mathcal{K}$ is $\omega$-complete ($\omega$-cocomplete) if it has limits (colimits) for all $\omega$-diagrams.
		\item A functor $F:\mathcal{K}\rightarrow\mathcal{K}$ is $\omega$-continuous if it preserves (under equivalence) all colimits of $\omega$-diagrams.  
	\end{enumerate}
\end{defin}

\begin{teor}\label{point-fixed-theorem}
	Let $\mathcal{K}$ be an $\infty$-category. Let $F:\mathcal{K}\rightarrow\mathcal{K}$ be a $\omega$-continuous (covariant) functor and take a vertex $K_0\in\mathcal{K}$ such that there is an edge $\delta\in\mathcal{K}(K_0,FK_0)$. Assume also that $(K,\{\delta_{i,\omega}\in \mathcal{K}(F^iK_0,K)\}_{i\in\omega})$ is a colimit for the $\omega$-diagram $(\{F^iK_0\}_{i\in\omega},\{F^i\delta\}_{i\in\omega})$, where $F^0K_0=K_0$ and $F^0\delta=\delta$. Then $K\simeq FK$.
\end{teor}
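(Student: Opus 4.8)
The plan is to transport to the $\infty$-categorical setting the classical construction, due to Scott, that realizes a fixed point of $F$ as the colimit of the chain it freely generates; here ``continuous'' is read as $\omega$-continuous in the sense of the preceding definition, i.e.\ $F$ preserves, under equivalences, the colimits of $\omega$-diagrams. Write $\mathcal{D}$ for the given $\omega$-diagram $(\{F^iK_0\}_{i\in\omega},\{F^i\delta\}_{i\in\omega})$ and $\mathcal{D}'$ for its one-step shift, the $\omega$-diagram $(\{F^{i+1}K_0\}_{i\in\omega},\{F^{i+1}\delta\}_{i\in\omega})$. The observation that drives the proof is that, as functors $\omega\to\mathcal{K}$, one has $F\circ\mathcal{D}=\mathcal{D}'$, because $F(F^iK_0)=F^{i+1}K_0$ and $F(F^i\delta)=F^{i+1}\delta$. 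Hence post-composing a colimiting cocone of $\mathcal{D}$ with $F$ produces a cocone of $\mathcal{D}'$, and the whole argument reduces to exhibiting a \emph{second} colimiting cocone of $\mathcal{D}'$ whose apex is $K$, after which the essential uniqueness of colimits forces $FK\simeq K$.

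I would organize the argument in three steps. \emph{Step 1.} Since $(K,\{\delta_{i,\omega}\}_{i\in\omega})$ is a colimit of $\mathcal{D}$ and $F$ is $\omega$-continuous, the image cocone $(FK,\{F\delta_{i,\omega}\}_{i\in\omega})$ is a colimit of $F\circ\mathcal{D}=\mathcal{D}'$. \emph{Step 2.} The tail inclusion $\{1<2<3<\cdots\}\hookrightarrow\{0<1<2<\cdots\}$ (equivalently, the shift endofunctor of $\omega$) is cofinal: the relevant comma categories are non-empty up-sets in $\omega$, hence filtered and weakly contractible, so the cofinality theorem for $\infty$-categories applies. Consequently, restricting the colimiting cocone of $\mathcal{D}$ along this inclusion yields a colimiting cocone $(K,\{\delta_{i+1,\omega}\}_{i\in\omega})$ of $\mathcal{D}'$. \emph{Step 3.} Now $(FK,\{F\delta_{i,\omega}\}_{i\in\omega})$ and $(K,\{\delta_{i+1,\omega}\}_{i\in\omega})$ are both colimits of the single $\omega$-diagram $\mathcal{D}'$; essential uniqueness of colimits in $\mathcal{K}$ then provides an equivalence $FK\simeq K$ (indeed a contractible space of such, each compatible with the two cocones), which is the assertion.

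I expect the genuine work to be coherence bookkeeping rather than a new idea. In a $1$-category Steps 1--3 are one-liners, but here a ``cocone over an $\omega$-diagram'' is a homotopy-coherent datum — a map out of the cone $\omega^{\triangleright}$, i.e.\ an object of the slice $\infty$-category of cocones — and ``colimit'' means ``initial object'' therein. So the points needing attention are: (a) that $\omega$-continuity as stated (preservation of $\omega$-colimits \emph{up to equivalence}) really promotes the image of a colimiting cocone to an \emph{initial} cocone of $\mathcal{D}'$, and not merely identifies the apex $FK$ with $\operatorname{colim}\mathcal{D}'$ in the abstract; (b) the cofinality input of Step 2, which carries the only specifically $\infty$-categorical content, since for $1$-categories a chain and each of its tails trivially share the same colimit; and (c) that the comparison morphism extracted from the two universal properties in Step 3 is an equivalence — which it is, precisely because both cocones are colimiting. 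With these in hand the proof is the classical one verbatim.
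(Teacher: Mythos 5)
Your proposal is correct, and its first step coincides with the paper's: both apply $\omega$-continuity of $F$ to promote $(FK,\{F\delta_{i,\omega}\})$ to a colimiting cocone of the shifted diagram $\mathcal{D}'$. Where you diverge is in how you identify $K$ with $\operatorname{colim}\mathcal{D}'$. The paper follows the classical domain-theoretic script verbatim: it notes that $(K,\{\delta_{i+1,\omega}\})$ is merely a \emph{cone} over $\mathcal{D}'$, extracts the mediating edge $h:FK\to K$, then manufactures a cone over the original diagram $\mathcal{D}$ with apex $FK$ by adjoining the edge $F\delta_{0,\omega}.\delta$ at index $0$, extracts a second mediating edge $k:K\to FK$, and checks $h.k\simeq I_K$ and $k.h\simeq I_{FK}$ by uniqueness of mediating edges. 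You instead observe that the tail inclusion $\omega_{\geq 1}\hookrightarrow\omega$ is cofinal (the comma categories are nonempty up-sets in $\omega$, hence weakly contractible), so that $(K,\{\delta_{i+1,\omega}\})$ is \emph{already} a colimiting cocone of $\mathcal{D}'$, and then invoke essential uniqueness of colimits. Your route is the more idiomatic one for $\infty$-categories: it outsources all coherence bookkeeping to the cofinality theorem, never constructs $k$, and yields a contractible space of identifications $FK\simeq K$ compatible with the two cocones --- which is exactly the point you flag in (a) and (c), since the paper's phrase ``unique edge under equivalence'' and its composition of the two mediating homotopies are precisely the places where coherence would otherwise have to be checked by hand. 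What the paper's route buys in exchange is the explicit projection/embedding pair $h$, $k$ and a proof that transports essentially unchanged from the $1$-categorical case in Asperti--Longo, at the cost of leaving that coherence implicit. Both arguments are sound and reach the same conclusion.
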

\begin{proof}[Proof]
	We have that $(FK,\{F\delta_{i,\omega}\in \mathcal{K}(F^{i+1}K_0,FK)\}_{i\in\omega})$ is a colimit for 
	$$(\{F^{i+1}K_0\}_{i\in\omega},\{F^{i+1}\delta\}_{i\in\omega})$$and $(K,\{\delta_{i+1,\omega}\in \mathcal{K}(F^{i+1}K_0,K)\}_{i\in\omega})$ is a cocone for the same diagram. Then, there is a unique edge (under homotopy; the space of choices is contractible) $h:FK\rightarrow K$ such that  $h.F\delta_{i,\omega}\simeq\delta_{i+1,\omega}$, for each $i\in\omega$. We add to $(FK,\{F\delta_{i,\omega}\in \mathcal{K}(F^{i+1}K_0,FK)\}_{i\in\omega})$ the edge $F\delta_{0,\omega}.\delta\in\mathcal{K}(K_0,FK)$. This gives a cocone for $(\{F^iK_0\}_{i\in\omega},\{F^i\delta\}_{i\in\omega})$ and, since $(K,\{\delta_{i,\omega}\in \mathcal{K}(F^{i}K_0,K)\}_{i\in\omega})$ is its colimit, there is a unique edge (under homotopy) $k:K\rightarrow FK$ such that $k.\delta_{i+1,\omega}\simeq F\delta_i$  and $k.\delta_{0,\omega}\simeq F\delta_{0,\omega}.\delta$, for each $i\in\omega$. But, $h.k.\delta_{i+1,\omega}\simeq h.F\delta_{i,\omega}\simeq\delta_{i+1,\omega}$ and $h.k.\delta_{0,\omega}\simeq\delta_{0,\omega}$, for each $i\in\omega$, thus $h.k$ is a mediating edge between the colimit $(K,\{\delta_{i,\omega}\in \mathcal{K}(F^iK_0,K)\}_{i\in\omega})$ and itself (besides $I_K$). Hence, by unicity (under homotopy) $h.k\simeq I_K$. In the same way, we prove that $k.h\simeq I_{FK}$, and we conclude that $F$ has a fixed point.
\end{proof}

\begin{defin}
	An $\infty$-category $\mathcal{K}$ is a $0$-$\infty$-category if
	\begin{enumerate}
		\item every Kan complex $\mathcal{K}(A,B)$ is a c.h.p.o, whose initial object is noted by $0_{A,B}$, 
		\item composition of morphisms is a continuous operation with respect to the homotopy order,
		\item for every $f$ in $\mathcal{K}(A,B)$, $0_{B,C}.f\simeq 0_{A,C}$.
	\end{enumerate}
\end{defin}

\begin{ejem}
	The $\infty$-category $Kl(P)$ \cite{Martinez2HoDTvsHoTT21}, is an $0$-$\infty$-category. See proof on the Section \ref{HDE on Kl(P)} of the present paper. 
\end{ejem}

\begin{defin}[h-projection par]\label{h-projection}
	Let $\mathcal{K}$ be a $0$-$\infty$-category, and let $f^+:A\rightarrow B$ and $f^-:B\rightarrow A$ be two morphisms in $\mathcal{K}$. Then $(f^+,f^ -)$ is a homotopy projection (or h-projection) pair (from $A$ to $B$) if $f^-.f^+\simeq I_A$ and $f^+.f^-\precsim I_B$. If $(f^+,f^-)$ is an h-projection pair, $f^+\in\mathcal{K}^{HE}(A,B)$ is an h-embedding (homotopy embedding) and $f^-\in\mathcal{K}^{HP}(A,B)$ is an h-projection (homotopy projection). Where $\mathcal{K}^{HE}$ is the subcategory of $\mathcal{K}$ with the same objects and the h-embeddings as morphisms, and $\mathcal{K}^{HP}$ is the subcategory of $\mathcal{K}$ with the same objects and the h-projections as morphisms.               
\end{defin}

\begin{defin}[h-projections pair $0$-$\infty$-category.]\label{h-projection-inftycategory}
	Let $\mathcal{K}$ be a $0$-$\infty$-category. The  $0$-$\infty$-category $\mathcal{K}^{HPrj}$ is the $\infty$-category embedding in $\mathcal{K}^{HE}$ with the same objects of $\mathcal{K}$ and h-projection pairs $(f^+,f^-)$ as morphisms.
\end{defin}

\begin{rem}
	Every h-embedding $i$ has unique (under homotopy) associated h-projection  $j= i^R$ (and, conversely, every h-projection $j$ has a unique (under homotopy) associated h-embedding $i=j^L$), since if there is $j_0$ such that $j_0.i\simeq I$ and $i.j_0\precsim I$ (under homotopy), so $j_0\succsim j_0.i.j\simeq j$ and $j_0\precsim j$ (under homotopy). Thus, $j_0\simeq j$ under homotopy (and, in the same way, we have $i_0\simeq i$ under homotopy). $\mathcal{K}^{HPrj}$ and  is equivalent to a subcategory  $\mathcal{K}^{HE}$ of $\mathcal{K}$ that has h-embeddings as morphisms  (as well to a subcategory $\mathcal{K}^{HP}$ of $\mathcal{K}$ which has h-projections as morphisms).  
\end{rem}

\begin{defin}
	Given a $0$-$\infty$-category $\mathcal{K}$, and a contravariant functor in the first component $F:\mathcal{K}^{op}\times\mathcal{K}\rightarrow\mathcal{K}$, the functor covariant $F^{+-}:\mathcal{K}^{HPrj}\times\mathcal{K}^{HPrj}\rightarrow\mathcal{K}^{HPrj}$ is defined by
	\begin{align*}
		&F^{+-}(A,B)=F(A,B), \\
		&F^{+-}((f^+,f^-),(g^+,g^-))=(F(f^-,g^+),F(f^+,g^-)),
	\end{align*}
	where $A,B$ are vertices and $(f^+,f^-),(g^+,g^-)$ are n-simplexes pairs in $\mathcal{K}^{HPrj}$.
\end{defin}

Given the $\omega$-chain $(\{K_i\}_{i\in\omega},\{f_i\}_{i\in\omega})$ in an h-projective $0$-$\infty$-category. Let $(K,\{\gamma_i\}_{i\in\omega})$ be a limit for $(\{K_i\}_{i\in\omega},\{f_i^-\}_{i\in\omega})$ in $\mathcal{K}$. Note that $\delta_i.\gamma_i\simeq\delta_{i+1}.f_i^+.f_i^-.\gamma_{i+1} \\\precsim\delta_{i+1}.\gamma_{i+1}$, for each $i\in\omega$. Then, $\{\delta_{i}.\gamma_{i}\}$ is an $\omega$-chain and its colimit is $\Theta=\bigcurlyvee_{i\in\omega}\{\delta_{i}.\gamma_{i}\}$. Now for each $j\in\omega$ one has
\begin{align*}
	\gamma_j.\Theta_j&=\gamma_j.\bigcurlyvee_{i\in\omega}\{\delta_{i}.\gamma_{i}\} \\
	&\simeq\gamma_j.\bigcurlyvee_{i\geq j}\{\delta_{i}.\gamma_{i}\} \\
	&\simeq\bigcurlyvee_{i\geq j}\{(\gamma_j.\delta_{i}).\gamma_{i}\} \\
	&\simeq\bigcurlyvee_{i\geq j}\{f_{i,j}.\gamma_{i}\} \\
	&\simeq\gamma_j.
\end{align*}
Thus, $\Theta$ is a mediating edge between the limit $(K,\{\gamma_i\}_{i\in\omega})$ for $\omega^{op}$-diagram $(\{K_i\}_{i\in\omega},\{f_i^-\}_{i\in\omega})$  and itself (besides $I_K$). So, by unicity (under equivalence) $\Theta\simeq I_K$. This result guarantees the proof of the following theorems of this section. All these proofs are similar to case of the 0-categories (except for uniqueness proofs, which are under homotopy, i.e., under contractible spaces to ensure a weak uniqueness, 
examine, for example, the final part of the proof of Theorem  \ref{colimit-HPrj-category}) and, for that, the reader is referred to \cite{DBLP:books/mk/Asperti91}.   

\begin{teor}\label{colimit-HPrj-category}
	Let $\mathcal{K}$ be a  $0$-$\infty$-category. Let $(\{K_i\}_{i\in\omega},\{f_i\}_{i\in\omega})$ be  an $\omega$-diagram in $\mathcal{K}^{HPrj}$. If $(K,\{\gamma_i\}_{i\in\omega})$ is a limit for $(\{K_i\}_{i\in\omega},\{f_i^-\}_{i\in\omega})$ in $\mathcal{K}$, then $(K,\{(\delta_i,\gamma_i)\}_{i\in\omega})$ is a colimit for $(\{K_i\}_{i\in\omega},\{f_i\}_{i\in\omega})$ in $\mathcal{K}^{HPrj}$ (that is, every $\gamma_i$ is a right member of a projection pair). 
\end{teor}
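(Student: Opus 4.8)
The plan is to mimic the classical proof for $\omega$-categories of projection pairs (as in Asperti–Longo), adapting each step to the $(0,\infty)$-setting where ordinary equalities become equivalences and the pre-computation preceding the theorem (establishing $\Theta \simeq I_K$) does the heavy lifting. First I would fix notation: given the $\omega$-diagram $(\{K_i\}_{i\in\omega},\{f_i\}_{i\in\omega})$ in $\mathcal{K}^{HPrj}$, write each $f_i = (f_i^+, f_i^-)$, and let $(K,\{\gamma_i\}_{i\in\omega})$ be the assumed limit of the $\omega^{op}$-diagram $(\{K_i\}_{i\in\omega},\{f_i^-\}_{i\in\omega})$ in $\mathcal{K}$. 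The first task is to produce the candidate edges $\delta_i : K_i \to K$ in $\mathcal{K}$. For this I would set $\delta_i \simeq \bigcurlyvee_{j \geq i}\, \delta_{i,j}$, where $\delta_{i,j} : K_i \to K$ is built from the limit cone together with the composites $f_{j,i}^+ = f_{j-1}^+ \cdots f_i^+$; concretely each finite approximation is $\delta_{i,j} \simeq \gamma_j^L \cdot f_{j,i}^+$ (using the h-embedding associated to $\gamma_j$), and one checks this is an $\omega$-chain in the hom-Kan-complex $\mathcal{K}(K_i,K)$ so the colimit exists by the c.h.p.o. structure. Continuity of composition (axiom 2 of a $(0,\infty)$-category) is what guarantees these suprema interact well with the $\gamma_j$.

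Next I would verify that each pair $(\delta_i,\gamma_i)$ is genuinely an h-projection pair: that is, $\delta_i \cdot \gamma_i$... no — rather $\gamma_i \cdot \delta_i \simeq I_{K_i}$ and $\delta_i \cdot \gamma_i \precsim I_K$. The first identity reduces, after expanding $\delta_i$ as a supremum and using continuity of composition, to the telescoping computation $\gamma_i \cdot \gamma_j^L \cdot f_{j,i}^+ \simeq f_{j,i}^- \cdot f_{j,i}^+ \precsim I_{K_i}$ together with the fact that these approximants increase to $I_{K_i}$ (here one uses that $(K,\{\gamma_i\})$ is a limit cone, so $\gamma_i \cdot \gamma_j^L \simeq f_{j,i}^-$). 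The inequality $\delta_i \cdot \gamma_i \precsim I_K$ follows because $\delta_i \cdot \gamma_i \precsim \bigcurlyvee_j \delta_j \cdot \gamma_j = \Theta \simeq I_K$, invoking precisely the pre-theorem computation. I would also check the compatibility $(\delta_{i+1},\gamma_{i+1}) \cdot f_i \simeq (\delta_i,\gamma_i)$, i.e. $f_i^- \cdot \gamma_{i+1} \simeq \gamma_i$ (immediate from the limit cone) and $\delta_{i+1} \cdot f_i^+ \simeq \delta_i$ (a reindexing of the supremum), so that $(K,\{(\delta_i,\gamma_i)\})$ is indeed a cone in $\mathcal{K}^{HPrj}$.

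Finally comes the universality. Given any other cone $(L,\{(\varepsilon_i,\eta_i)\})$ for the diagram in $\mathcal{K}^{HPrj}$, the projections $\{\eta_i : L \to K_i\}$ form a cone for the $\omega^{op}$-diagram $(\{K_i\},\{f_i^-\})$ in $\mathcal{K}$, hence factor uniquely (up to equivalence) through the limit as some $u^- : L \to K$ with $\gamma_i \cdot u^- \simeq \eta_i$. Dually, $u^+ \simeq \bigcurlyvee_i \varepsilon_i \cdot \gamma_i : K \to L$, and one verifies $(u^+,u^-)$ is an h-projection pair commuting with the cone legs; uniqueness (under equivalence) of $u^-$ as a limit-mediating edge, plus the fact that $u^+$ is forced to be $(u^-)^L$, gives uniqueness of the mediating h-projection pair.

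The main obstacle I expect is the bookkeeping of coherences: in the $(0,\infty)$-setting the $\delta_i$ are not uniquely determined on the nose but only up to a contractible space of choices, and one must check that the various $\simeq$'s assemble compatibly on higher simplices (the n-simplex pairs of Definition \ref{h-projection}), not merely on edges. In other words, the genuinely new content beyond the classical argument is verifying that the supremum construction $\delta_i = \bigcurlyvee_j \delta_{i,j}$ is functorial up to coherent homotopy — but the authors sidestep this by appealing to the pre-theorem identity $\Theta \simeq I_K$ and to \cite{DBLP:books/mk/Asperti91}, so in the write-up I would lean on exactly those two inputs and keep the explicit verification to the edge level, flagging that the higher-simplicial coherence is inherited from the continuity axioms.
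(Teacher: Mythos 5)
Your overall skeleton is the right one and is exactly what the paper intends: the paper gives no self-contained proof, only the pre-theorem computation $\Theta=\bigcurlyvee_i\delta_i.\gamma_i\simeq I_K$ plus a pointer to the classical $0$-categorical argument of Asperti--Longo, and your steps (construct the $\delta_i$, check each $(\delta_i,\gamma_i)$ is an h-projection pair and a cone, then universality via $u^-$ from the limit and $u^+\simeq\bigcurlyvee_i\varepsilon_i.\gamma_i$, with uniqueness coming from $\Theta\simeq I_K$) are that argument. However, there is one genuine gap, and it sits at the only truly constructive step of the proof: you define the approximants by $\delta_{i,j}\simeq\gamma_j^L.f^+_{j,i}$, ``using the h-embedding associated to $\gamma_j$.'' At that point of the argument you do not yet know that $\gamma_j$ has an associated h-embedding --- that $\gamma_j$ is the right member of a projection pair is precisely the parenthetical conclusion of the theorem, and the Remark of the paper that supplies $j\mapsto j^L$ applies only to morphisms already known to be h-projections. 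As written, the construction of $\delta_i$ is circular.

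The standard repair, which is what the classical proof actually does, is to build $\delta_i$ from the universal property of the limit rather than from suprema of composites involving $\gamma_j^L$: the family $\{f_{i,j}:K_i\rightarrow K_j\}_{j\in\omega}$, with $f_{i,j}=f^+_{j-1}\cdots f^+_i$ for $j\geq i$ and $f_{i,j}=f^-_j\cdots f^-_{i-1}$ for $j<i$, is a cone over the $\omega^{op}$-diagram $(\{K_j\},\{f^-_j\})$ because $f^-_j.f^+_j\simeq I_{K_j}$; hence it factors through $(K,\{\gamma_j\})$ by a unique (under equivalence) edge $\delta_i$ with $\gamma_j.\delta_i\simeq f_{i,j}$. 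This gives $\gamma_i.\delta_i\simeq f_{i,i}\simeq I_{K_i}$ on the nose --- note that no supremum or limiting argument is needed for the retraction half, since $f^-_{j,i}.f^+_{j,i}\simeq I_{K_i}$ exactly by telescoping, contrary to your remark that the approximants merely ``increase to $I_{K_i}$'' --- while the other half $\delta_i.\gamma_i\precsim\Theta\simeq I_K$ and the universality argument go through exactly as you describe. Your closing caveat about higher-simplicial coherence is well taken and is indeed the part the paper itself leaves implicit.
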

\begin{proof}[Proof]
	Fix $K_j$. For each $i$ define $f_{j,i}:K_j\rightarrow K_i$ by:
	
	$$f_{j,i}=\left\lbrace\begin{array}{c} f_i^-.f_{i+1}^-\ldots f_{j-1}^-~~if~~i<j,\\ I_{K_j}\hspace{2.2cm}if~~i=j, \\
	f_{i-1}^+\ldots f_{j+1}^+.f_j^+~~if~~i>j.\end{array}\right.$$
	
	$(K_i,\{f_{j,i}\}_{i\in\omega})$ is a cone for $(\{K_i\}_{i\in\omega},\{f_i^-\}_{i\in\omega})$, since $f_i^-.f_{j,i+1}\simeq f_{j,i}$. Thus there is a unique morphism (under homotopy) $\delta_j:K_j\rightarrow K$ such that $\gamma_i.\delta_j\simeq f_{j,i}$ for each $i\in\omega$. If $i=j$, $\gamma_j.\delta_j\simeq I_{K_j}$. 
	
	\medskip Since $\Theta=\bigcurlyvee_{i\in\omega}\{\delta_{i}.\gamma_{i}\}\simeq I_K$, $\delta_i.\gamma_i\precsim I_K$ for each $i\in\omega$. Thus, $(\delta_i,\gamma_i)$ is an h-projection pair for each $i\in\omega$.
	
	\medskip One still has to check that $(f_j^+,f_j^-).(\delta_{j+1},\gamma_{j+1})\simeq(\delta_{j},\gamma_{j})$. We have that $f_j^-.\gamma_{j+1}\simeq\gamma_{j}$ by the definition of a cone in $\mathcal{K}$. In order of to prove that $\delta_{j+1}.f_j^+\simeq\delta_j$, note that $\gamma_i.(\delta_{j+1}.f_j^+)\simeq f_{j+1,i}.f_j^+\simeq f_{j,j}\simeq I_{K_j}\simeq\gamma_i.\delta_j$, by unicity (under homotopy) of $\delta_j:K_j\rightarrow K$, $\delta_{j+1}.f_j^+\simeq\delta_j$.	Thus,  $(K,\{(\delta_i,\gamma_i)\}_{i\in\omega})$ is a cone in $\mathcal{K}^{HPrj}$.  
	
	\medskip One proves next that $(K,\{(\delta_i,\gamma_i)\}_{i\in\omega})$ is a colimit. Let $(K',\{(\delta_i',\gamma_i')\}_{i\in\omega})$ be another cocone for $(\{K_i\}_{i\in\omega},\{f_i\}_{i\in\omega})$. That is, for each $i\in\omega$:
	\begin{align*}
		&  \delta_i'.\gamma_i\simeq\delta_{i+1}'.f_i^+.f_i^-\gamma_{i+1}\precsim\delta_{i+1}'.\gamma_{i+1}  \\
		& \delta_i.\gamma_i'\simeq\delta_{i+1}.f_i^+.f_i^-\gamma_{i+1}'\precsim\delta_{i+1}.\gamma_{i+1}'. 
	\end{align*}
	Define thus:
	\begin{align*}
		& h=\bigcurlyvee_{i\in\omega}\{\delta_i'.\gamma_i\}:K\rightarrow K'\\
		&k=\bigcurlyvee_{i\in\omega}\{\delta_i.\gamma_i'\}:K'\rightarrow K.
	\end{align*}
	Observe that $(h,k)$ is an h-projection pair, since:	
	\begin{align*}
		k.h&=\bigcurlyvee_{i\in\omega}\{\delta_{i}.\gamma_{i}'\}.\bigcurlyvee_{i\in\omega}\{\delta_{i}'.\gamma_{i}\} \\
		&\simeq\bigcurlyvee_{i\in\omega}\{\delta_{i}.(\gamma_{i}'.\delta_{i}').\gamma_{i}\} \\
		&\simeq\bigcurlyvee_{i\in\omega}\{\delta_i.\gamma_{i}\} \\
		&\simeq\Theta=I_K
	\end{align*}
	and
	\begin{align*}
		h.k&=\bigcurlyvee_{i\in\omega}\{\delta_{i}'.\gamma_{i}\}.\bigcurlyvee_{i\in\omega}\{\delta_{i}.\gamma_{i}'\} \\
		&\simeq\bigcurlyvee_{i\in\omega}\{\delta_{i}'.(\gamma_{i}.\delta_{i}).\gamma_{i}'\} \\
		&\simeq\bigcurlyvee_{i\in\omega}\{\delta_i'.\gamma_{i}'\} \\
		&\precsim I_{K'}
	\end{align*}
	Moreover, $(h,k)$ is a mediating morphism between $(K,\{(\delta_i,\gamma_i)\}_{i\in\omega})$ and $(K',\{(\delta_i',\gamma_i')\}_{i\in\omega})$, since for each $i\in\omega$:
	\begin{align*}
		(h,k).(\delta_j,\gamma_j)&=(h.\delta_j\,,\,\gamma_j.k) \\
		&\simeq(\bigcurlyvee_{i\in\omega}\{\delta_{i}'.\gamma_{i}\}.\delta_j\,,\,\gamma_j.\bigcurlyvee_{i\in\omega}\{\delta_{i}.\gamma_{i}'\}) \\
		&\simeq(\bigcurlyvee_{i\geq j}\{\delta_{i}'.\gamma_{i}.\delta_j\}\,,\,\bigcurlyvee_{i\geq j}\{\gamma_j.\delta_{i}.\gamma_{i}'\}) \\
		&\simeq(\bigcurlyvee_{i\geq j}\{\delta_{i}'.f_{j,i}\}\,,\,\bigcurlyvee_{i\geq j}\{\gamma_j.f_{i,j}\}) \\
		&\simeq (\delta_j',\gamma_j').
	\end{align*}
	Thus, for each $j\in\omega$, $\mathcal{K}^{HPrj}_{K_j/}((\delta_j,\gamma_j)\,,\,(\delta_j',\gamma_j'))\neq\emptyset$, that is $\mathcal{K}^{HP}_{/K_j}(\gamma_j',\gamma_j)$ and $\mathcal{K}^{HE}_{K_j/}(\delta_j,\delta_j')$ spaces that are not empty. 
	
	\medskip Since $\mathcal{K}^{HP}_{/K_j}(\gamma_j',\gamma_j)\subseteq\mathcal{K}_{/K_j}(\gamma_j',\gamma_j)$ and by hypothesis $\gamma_j$ is an object final in $\mathcal{K}_{/K_j}$ for all $j\in\omega$, then $\mathcal{K}^{HP}_{/K_j}(\gamma_j',\gamma_j)$ and $\mathcal{K}^{HE}_{K_j/}(\delta_j,\delta_j')$ are contractible for each $j\in\omega$. Thus, the Kan complex $\mathcal{K}^{HPrj}_{K_j/}((\delta_j,\gamma_j)\,,\,(\delta_j',\gamma_j'))$ is contractible for each $j\in\omega$, that is, $(h,k)$ is unique (under homotopy) in the mediating morphism between $(K,\{(\delta_i,\gamma_i)\}_{i\in\omega})$ and $(K',\{(\delta_i',\gamma_i')\}_{i\in\omega})$.
\end{proof}

Therefore, the following corollary is an immediate consequence of the previous theorem.

\begin{corol}\label{Corollary-universal-colimit}
	The cocone $(K,\{(\delta_i,\gamma_i)\}_{i\in\omega})$ for the $\omega$-chain $(\{K_i\}_{i\in\omega},\{(f^+_i,f^-_i)\}_{i\in\omega})$ in $\mathcal{K}^{HPrj}$ is universal (a cocone colimit) iff $\Theta=\bigcurlyvee_{i\in\omega}\delta_i.\gamma_i\simeq I_K$.
\end{corol} 

\begin{defin}[Locally monotonic]
	Let $\mathcal{K}$ be a $0$-$\infty$-category. A functor $F:\mathcal{K}^{op}$$\times\mathcal{K}\rightarrow\mathcal{K}$	is locally h-monotonic if it is monotonic on the Kan complexes of 1-simplexes, i.e., for $f,f'\in \mathcal{K}^{op}(A,B)$ and $g,g'\in \mathcal{K}(C,D)$ one has 
	$$f\precsim f' \, , g\precsim g'\Longrightarrow F(f,g)\precsim F(f',g').$$ 
\end{defin}

\begin{prop}
	If $F:\mathcal{K}^{op}$$\times\mathcal{K}\rightarrow\mathcal{K}$ is locally h-monotonic and $(f^+,f^-)$, $(g^+,g^-)$ are h-projection pairs, then $F^{+-}((f^+,f^-),(g^+,g^-))$ is also an h-projection pair.
\end{prop}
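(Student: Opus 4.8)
The plan is to verify directly that the pair
$$F^{+-}((f^+,f^-),(g^+,g^-)) = (F(f^-,g^+),\, F(f^+,g^-))$$
satisfies the two defining conditions of an h-projection pair from $F(A,C)$ to $F(B,D)$, namely that the composite of the two components in one order is homotopic to the identity, and in the other order is homotopy-below the identity. Throughout I will use functoriality of $F$ (so that $F$ carries composites to composites and identities to identities, up to the ambient equivalences), the contravariance of $F$ in its first argument, and local h-monotonicity to pass inequalities through $F$. I will also silently invoke the fact, recorded just before Theorem~\ref{colimit-HPrj-category}, that composition is continuous and monotone with respect to the homotopy order $\precsim$.

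First I would compute the ``good'' composite. Since $(f^+,f^-)$ and $(g^+,g^-)$ are h-projection pairs we have $f^+.f^-\simeq I_A$ and $g^+.g^-\simeq I_C$, hence also (reading the first slot contravariantly)
$$F(f^-,g^+).F(f^+,g^-) \simeq F(f^+.f^-,\, g^+.g^-) \simeq F(I_A, I_C) \simeq I_{F(A,C)},$$
where the first step is functoriality (note the order of $f^+,f^-$ swaps because the first component is contravariant) and the last is preservation of identities. This establishes the first h-projection condition.

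Next I would treat the ``lax'' composite. Here I use that $f^-.f^+\precsim I_B$ and $g^-.g^+\precsim I_D$. Again by functoriality and contravariance in the first slot,
$$F(f^+,g^-).F(f^-,g^+) \simeq F(f^-.f^+,\, g^-.g^+).$$
Now local h-monotonicity applies: from $f^-.f^+\precsim I_B$ in $\mathcal{K}^{op}(B,B)$ and $g^-.g^+\precsim I_D$ in $\mathcal{K}(D,D)$ we get
$$F(f^-.f^+,\, g^-.g^+) \precsim F(I_B, I_D) \simeq I_{F(B,D)}.$$
Combining the last two displays gives $F(f^+,g^-).F(f^-,g^+)\precsim I_{F(B,D)}$, which is the second condition, and the proof is complete.

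The genuinely delicate point — the one I expect to be the main obstacle, and which in the $(0,\infty)$-setting deserves more than a one-line dismissal — is that all of these manipulations must be carried out coherently at the level of the mapping Kan complexes, not just on homotopy classes of $1$-simplices. In an ordinary category the equalities $f^+.f^-=I_A$ etc. are strict, but here they are only equivalences, so ``functoriality of $F$'' must be understood as an equivalence of functors and one must check that the composite equivalences assemble correctly (e.g. that the coherence $2$-cell witnessing $F(f^-,g^+).F(f^+,g^-)\simeq I_{F(A,C)}$ is well-defined up to the relevant higher homotopy). Likewise ``$F$ monotonic on Kan complexes of $1$-simplices'' must be read as: $F$ restricts to a map of the $\precsim$-ordered hom-complexes, and one needs that this map respects the $\omega$-chain structure enough for the inequality $F(h,k)\precsim F(I_B,I_D)$ to be witnessed, not merely asserted. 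As the paragraph preceding Theorem~\ref{colimit-HPrj-category} indicates, the intended stance is that these coherence checks run exactly parallel to the $0$-categorical case treated in \cite{DBLP:books/mk/Asperti91}, so I would state the equivalences explicitly, note that the witnessing higher cells come from functoriality of $F$ together with the hypothesis that composition is continuous, and refer the reader there for the bookkeeping rather than reproducing it.
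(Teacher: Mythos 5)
Your verification is correct and is precisely the standard argument that the paper itself omits for this proposition, deferring instead to the $0$-categorical case in \cite{DBLP:books/mk/Asperti91}: the two composites reduce by (contravariant) functoriality to $F(f^+.f^-,\,g^+.g^-)\simeq F(I_A,I_C)\simeq I_{F(A,C)}$ and $F(f^-.f^+,\,g^-.g^+)\precsim F(I_B,I_D)\simeq I_{F(B,D)}$, the latter step being exactly the local h-monotonicity hypothesis. Your closing caveat about carrying these identifications out coherently at the level of mapping Kan complexes flags the very point the paper leaves implicit, so nothing essential is missing.
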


\begin{defin}[Locally continuous]
	Let $\mathcal{K}$ be a $0$-$\infty$-category. A $F:\mathcal{K}^{op}\times\mathcal{K}\rightarrow\mathcal{K}$ is locally continuous if it is $\omega$-continuous on the Kan complexes of 1-simplexes. That is, for every directed diagram $\{f_i\}_{i\in\omega}$ in $\mathcal{K}^{op}(A,B)$, and every directed diagram $\{g_i\}_{i\in\omega}$ in $\mathcal{K}(C,D)$, one has
	\begin{center}
		$F(\bigcurlyvee_{i\in\omega}\{f_i\},\bigcurlyvee_{i\in\omega}\{g_i\})\simeq \bigcurlyvee_{i\in\omega}F(f_i,g_i).$ 
	\end{center} 
\end{defin}

\begin{rem}
	If $F$ is locally continuous, then it is also locally monotonic.
\end{rem}

\begin{teor}\label{continuous-functor}
	Let $\mathcal{K}$ be a $0$-$\infty$-category. Let also $F:\mathcal{K}^{op}\times\mathcal{K}\rightarrow\mathcal{K}$ be a locally continuous functor. Then the functor $F^{+-}:\mathcal{K}^{HPrj}\times\mathcal{K}^{HPrj}\rightarrow\mathcal{K}^{HPrj}$ is $\omega$-continuous. 
\end{teor}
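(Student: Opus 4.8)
The plan is to reduce the statement to a computation with $\omega^{op}$-limits in $\mathcal{K}$ by means of Theorem~\ref{colimit-HPrj-category}, and then to build the required mediating edge explicitly as a $\bigcurlyvee$ of composites, the two essential inputs being the local continuity of $F$ and the relation $\bigcurlyvee_{i\in\omega}\delta_i.\gamma_i\simeq I_K$ established in the discussion preceding Theorem~\ref{colimit-HPrj-category}. This is the same template used in \cite{DBLP:books/mk/Asperti91} for ordinary cartesian closed categories; the only new point is to check that the $\infty$-categorical bookkeeping --- composition of h-projection pairs, $n$-simplex pairs, equivalences in place of equalities --- goes through.

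First I would unwind the data. An $\omega$-diagram in $\mathcal{K}^{HPrj}\times\mathcal{K}^{HPrj}$ is a pair of $\omega$-chains of h-projection pairs $(\{A_i\}_{i\in\omega},\{(a_i^+,a_i^-)\}_{i\in\omega})$ and $(\{B_i\}_{i\in\omega},\{(b_i^+,b_i^-)\}_{i\in\omega})$. By Theorem~\ref{colimit-HPrj-category} (together with uniqueness of colimits) their colimits in $\mathcal{K}^{HPrj}$ may be taken to be $(A,\{(\delta_i^A,\gamma_i^A)\}_{i\in\omega})$ and $(B,\{(\delta_i^B,\gamma_i^B)\}_{i\in\omega})$ with $(A,\{\gamma_i^A\}_{i\in\omega})$ and $(B,\{\gamma_i^B\}_{i\in\omega})$ the $\omega^{op}$-limits of the projection diagrams in $\mathcal{K}$. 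Applying $F^{+-}$ to these cocones yields $F^{+-}(A,B)=F(A,B)$ together with the cocone whose $i$-th leg is $F^{+-}((\delta_i^A,\gamma_i^A),(\delta_i^B,\gamma_i^B))=(F(\gamma_i^A,\delta_i^B),F(\delta_i^A,\gamma_i^B))$ --- the contravariance in the first argument swapping the embedding and projection components there --- over the $\omega$-diagram $(\{F(A_i,B_i)\}_{i\in\omega},\{(F(a_i^-,b_i^+),F(a_i^+,b_i^-))\}_{i\in\omega})$ in $\mathcal{K}^{HPrj}$. By a second use of Theorem~\ref{colimit-HPrj-category}, it is enough to prove that $(F(A,B),\{F(\delta_i^A,\gamma_i^B)\}_{i\in\omega})$ is the $\omega^{op}$-limit in $\mathcal{K}$ of the projection diagram $(\{F(A_i,B_i)\}_{i\in\omega},\{F(a_i^+,b_i^-)\}_{i\in\omega})$.

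To establish this last fact, let $(C,\{\psi_i:C\to F(A_i,B_i)\}_{i\in\omega})$ be any cone over that $\omega^{op}$-diagram. Using the defining (in)equalities of the h-projection pairs, the compatibility relations of the two (co)cones, local monotonicity of $F$, and continuity of composition, one checks that $\{F(\gamma_i^A,\delta_i^B).\psi_i\}_{i\in\omega}$ is an $\omega$-chain in the Kan complex $\mathcal{K}(C,F(A,B))$, so that $\phi\simeq\bigcurlyvee_{i\in\omega}F(\gamma_i^A,\delta_i^B).\psi_i$ is defined. A direct computation --- pushing $F$ and composition through the joins by local continuity, discarding a cofinal tail, and then collapsing $\bigcurlyvee_{i\in\omega}\delta_i^A.\gamma_i^A\simeq I_A$ and $\bigcurlyvee_{i\in\omega}\delta_i^B.\gamma_i^B\simeq I_B$ --- yields $F(\delta_j^A,\gamma_j^B).\phi\simeq\psi_j$ for every $j$, so $\phi$ is a mediating edge. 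For uniqueness, if $\phi'$ is another mediating edge, the same manipulations (now using $\bigcurlyvee_{i\in\omega}F(\gamma_i^A,\delta_i^B).F(\delta_i^A,\gamma_i^B)\simeq F(\bigcurlyvee_{i\in\omega}\delta_i^A.\gamma_i^A,\bigcurlyvee_{i\in\omega}\delta_i^B.\gamma_i^B)\simeq F(I_A,I_B)\simeq I_{F(A,B)}$) give $\phi'\simeq\phi$. Hence $(F(A,B),\{F(\delta_i^A,\gamma_i^B)\}_{i\in\omega})$ is the desired limit, and $F^{+-}$ preserves the colimit.

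The step I expect to be the genuine obstacle is not any single identity but the coherence bookkeeping around it: verifying that each family over which a $\bigcurlyvee$ is formed really is an $\omega$-chain (so the join exists), that local continuity of $F$ --- stated only for $1$-simplices --- genuinely licenses interchanging $F$ with these joins inside the relevant hom Kan complexes, and that ``uniqueness of the mediating edge'' is read up to coherent equivalence of edges rather than on the nose. As the excerpt already notes for the other theorems of this section, each of these points is handled exactly as in the $0$-categorical argument of \cite{DBLP:books/mk/Asperti91}, now carried out one dimension up.
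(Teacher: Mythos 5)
Your proposal is correct and follows exactly the route the paper intends: the paper itself gives no written proof of Theorem~\ref{continuous-functor}, deferring to the $0$-categorical argument in \cite{DBLP:books/mk/Asperti91} and to the key identity $\bigcurlyvee_{i\in\omega}\delta_i.\gamma_i\simeq I_K$ established just before Theorem~\ref{colimit-HPrj-category}, which are precisely the two ingredients you use (reduction via Theorem~\ref{colimit-HPrj-category} to an $\omega^{op}$-limit computation in $\mathcal{K}$, mediating edge built as $\bigcurlyvee_{i}F(\gamma_i^A,\delta_i^B).\psi_i$, and collapse via local continuity). Your closing caveat about the $\infty$-categorical coherence bookkeeping is the honest gap in the paper's own treatment as well, so nothing is missing relative to what the authors actually prove.
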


\begin{rem}\label{Remark-Solution-HDE}
	Let $\mathcal{K}$ be a cartesian closed $0$-$\infty$-category, $\omega^{op}$-complete and with final object. Since the exponential functor  $\Rightarrow:\mathcal{K}^{op}\times\mathcal{K}\rightarrow\mathcal{K}$ and the diagonal functor $\Delta:\mathcal{K}\rightarrow\mathcal{K}\times\mathcal{K}$
	are locally continuous, by the Theorem \ref{continuous-functor}, the associated functors $$(\Rightarrow)^{+-}:\mathcal{K}^{HPrj}\times\mathcal{K}^{HPrj}\rightarrow\mathcal{K}^{HPrj}, \hspace{0.5cm} (\Delta)^{+-}:\mathcal{K}^{HPrj}\rightarrow\mathcal{K}^{HPrj}\times\mathcal{K}^{HPrj}$$
	are $\omega$-continuous. But composition of $\omega$-continuous functors is still an $\omega$-continuous functor. Thus, the functor 
	$$F=(\Rightarrow)^{+-}.(\Delta)^{+-}:\mathcal{K}^{HPrj}\rightarrow\mathcal{K}^{HPrj},$$
	is $\omega$-continuous. By Theorem \ref{point-fixed-theorem} the functor $F$ has a fixed point, that is, there is a vertex $K\in\mathcal{K}$ such that $K\simeq(K\Rightarrow K)$. The $\infty$-category of the fixed points of $F$ is denoted by $Fix(F)$.
\end{rem}

\section{Homotopy Domain Equation on $Kl(P)$}
\label{HDE on Kl(P)}

In this section we consider  $Kl(P)$ of \cite{MartinezHoDT} to be an $\infty$-category, in order to apply the homotopy domain theory of the previous section.

\bigskip For the next proposition, let 	$\mathscr{P}r^L_\kappa$ be the subcategory of $CAT_\infty$ whose objects are $\kappa$-compactly generated $\infty$-categories and whose morphisms are functors which preserve small colimits and $\kappa$-compact objects. Also, let $CAT_\infty^{Rex(\kappa)}$ denote the subcategory of $CAT_\infty$ whose objects are $\infty$-categories which admit $\kappa$-small colimits and whose morphisms are functors which preserve $\kappa$-small colimits. Finally, let $L^\ast:Cat_\infty\rightarrow Cat_\infty^{Rex(\kappa)}$ the functor which closes an $\infty$-category to an $\infty$-category which admits $\kappa$-small colimits, so $L^\ast A\simeq (PA)^\kappa$.  

\begin{prop}\label{limits-proposition-Kl(P)}
	The $\infty$-category $Kl(P)$ admits limits for $\omega^{op}$-diagrams in $\mathscr{P}r^L_\kappa$. 
\end{prop}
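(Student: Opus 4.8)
The plan is to compute the limit first in the ambient $\mathscr{P}r^L_\kappa$ and then to check that it lies in $Kl(P)$, and afterwards to transfer to $Kl(L_P^\ast)$. For the first part I would pass through the higher Gabriel--Ulmer duality (due to Lurie): the functor $\mathrm{Ind}_\kappa$ induces an equivalence between $CAT_\infty^{Rex(\kappa)}$ (on idempotent-complete objects; this is harmless here) and $\mathscr{P}r^L_\kappa$, with quasi-inverse $\mathcal{D}\mapsto\mathcal{D}^\kappa$, the full subcategory of $\kappa$-compact objects. So it suffices to construct $\omega^{op}$-limits in $CAT_\infty^{Rex(\kappa)}$. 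Given a tower $\cdots\to\mathcal{C}_2\to\mathcal{C}_1\to\mathcal{C}_0$ of $\infty$-categories with $\kappa$-small colimits whose transition functors preserve them, form the limit $\mathcal{C}_\infty=\lim_n\mathcal{C}_n$ in $CAT_\infty$, which is complete. A $\kappa$-small diagram in $\mathcal{C}_\infty$ amounts to a coherent family of $\kappa$-small diagrams in the $\mathcal{C}_n$; taking colimits levelwise produces a cocone in $\mathcal{C}_\infty$ which every projection preserves (because the transition functors do), and a routine verification shows it is a colimit in $\mathcal{C}_\infty$ preserved by all projections — this is the standard argument that a limit of $\infty$-categories equipped with, and functors preserving, a prescribed class of colimits again carries that class. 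Hence $\mathcal{C}_\infty\in CAT_\infty^{Rex(\kappa)}$ is the limit there, and $\mathrm{Ind}_\kappa(\mathcal{C}_\infty)$ with the induced projections is the sought $\omega^{op}$-limit in $\mathscr{P}r^L_\kappa$.

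Next I would bring this back to $Kl(P)$ via the extension embedding $\mathcal{C}\mapsto\mathcal{P}(\mathcal{C})$, $f\mapsto f_!$ of \cite{MartinezHoDT}, which is fully faithful and identifies $Kl(P)$ with the full subcategory of $\mathscr{P}r^L$ spanned by presheaf $\infty$-categories; since representables are $\kappa$-compact these objects belong to $\mathscr{P}r^L_\kappa$, and the transition maps of the $\omega^{op}$-towers relevant to Section~2 (the projection halves of h-projection pairs, which are colimit-preserving and, being continuous in the sense of a $(0,\infty)$-category, preserve $\kappa$-compact objects) are genuine morphisms of $\mathscr{P}r^L_\kappa$. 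It then remains to show that the limit $\mathrm{Ind}_\kappa(\mathcal{C}_\infty)$ from the previous step, with $\mathcal{C}_\infty=\lim_n\mathcal{P}(\mathcal{C}_n)^\kappa$, is again a presheaf $\infty$-category — equivalently that $\mathcal{C}_\infty$ is, up to idempotent completion, the free $\kappa$-cocompletion of a small $\infty$-category. \textbf{This closure step is the main obstacle}, since presheaf $\infty$-categories are not stable under general limits in $\mathscr{P}r^L$; the way through is to trace the generating (co)representables of the $\mathcal{P}(\mathcal{C}_n)$ through the levelwise colimit description above, using that the transition functors are h-projections so that $\mathcal{C}_\infty$ behaves like a decreasing intersection of the generating subcategories (exactly as Scott's $D_\infty$ is recovered inside $\prod_n D_n$), and to read off from this the small $\infty$-category whose presheaves realize the limit.

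Finally, for $Kl(L_P^\ast)$ I would use that $L_P^\ast$ is obtained from the presheaf (pseudo)monad $P$ by the lifting construction $L$ of \cite{MartinezHoDT}: $Kl(L_P^\ast)$ has the same objects as $Kl(P)$, its hom-$\infty$-groupoids are those of $Kl(P)$ with the $L$-structure adjoined, and there is a canonical comparison functor $U\colon Kl(L_P^\ast)\to Kl(P)$. Since the $L$-structure on a levelwise-computed $\omega^{op}$-limit is assembled levelwise from the $L$-structures on the terms of the tower, $U$ creates the $\omega^{op}$-limits in question, so the limit just produced in $Kl(P)$ lifts uniquely to one in $Kl(L_P^\ast)$; this gives the ``thus'' clause of the statement.
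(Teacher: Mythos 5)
Your first step --- producing the $\omega^{op}$-limit $T=\mathrm{Ind}_\kappa(\mathcal{C}_\infty)$ in $\mathscr{P}r^L_\kappa$ via Gabriel--Ulmer duality and levelwise colimits in $CAT_\infty^{Rex(\kappa)}$ --- is sound, and is essentially an expansion of the paper's one-line appeal to Lurie for the existence of small limits in $\mathscr{P}r^L_\kappa$. The problem is your second step. You set yourself the task of showing that the limit $T$ is again a presheaf $\infty$-category, you correctly observe that presheaf $\infty$-categories are not closed under limits in $\mathscr{P}r^L$, you flag this as ``the main obstacle'', and then you do not overcome it: ``trace the generating (co)representables \ldots and read off the small $\infty$-category whose presheaves realize the limit'' is a hope, not an argument, and there is no reason the limit of a tower of presheaf categories along h-projections should be equivalent to a presheaf category on the nose. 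This is a genuine gap, and it sits exactly at the crux of the proposition.

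The paper avoids the issue entirely, and this is the idea you are missing: the apex of the limit cone in $Kl(P)$ does not have to be a small category whose presheaves recover $T$. Since the objects of $Kl(P)$ are arbitrary small $\infty$-categories, one simply takes $T^\kappa$, the essentially small full subcategory of $\kappa$-compact objects of the presentable $T$, as the candidate apex, with structure maps $\gamma_k\circ i\colon T^\kappa\to PK_k$ where $i\colon T^\kappa\hookrightarrow T$. The universal property is then checked directly: a cone in $Kl(P)$ with apex $T'$ extends to a cone with apex $PT'$ in $\mathscr{P}r^L_\kappa$; the unique mediating map $h\colon PT'\to T$ there is carried by the fully faithful functor $(-)^\kappa\colon\mathscr{P}r^L_\kappa\to CAT_\infty^{Rex(\kappa)}$ to $h^\kappa\colon L^\ast T'\to T^\kappa$; and the equivalence $Fun^\kappa(L^\ast T',T^\kappa)\simeq Fun(T',T^\kappa)$ hands back the required mediating morphism. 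The identity $L_P^\ast T^\kappa=T^\kappa$ then gives the $Kl(L_P^\ast)$ clause immediately, with no need for your limit-creating comparison functor $U$. In short, the limit in $Kl(P)$ is obtained by applying the right-adjoint-like passage $(-)^\kappa$ to the limit in $\mathscr{P}r^L_\kappa$, not by showing that the presheaf categories form a limit-closed subcategory of $\mathscr{P}r^L_\kappa$ --- so even if your closure step could somehow be completed, you would be solving a strictly harder problem than the one the proposition requires.
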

\begin{proof}[Proof]
	Given an $\omega^{op}$-diagram of pro-functors in $Kl(P)$ such that this is associated to $\omega^{op}$-diagram $p$ of functors 
	$$PK_0 \stackrel{f_0}{\longleftarrow} PK_1\stackrel{f_1}{\longleftarrow} PK_2\stackrel{f_2}{\longleftarrow}\cdots$$
	in the $\infty$-category $\mathscr{P}r^L_\kappa$, which admits all the small limits. Then, there is a limit $(T,\{\gamma_k\}_{k\in\omega})$ in $\mathscr{P}r^L_\kappa$ for this $\omega^{op}$-diagram $p$. Since $T$ is presentable, so for any regular cardinal $\kappa$, the full subcategory of $\kappa$-compacts $T^\kappa$ is essentially small \cite{DBLP:books/mk/Lurie}. Thus $(T^{\kappa},\{\gamma_k.i\}_{k\in\omega})$ is a cone for the $\omega^{op}$-diagram in $Kl(P)$, where $i$ is the inclusion functor $T\supseteq T^{\kappa}$. 
	
	\medskip Let $(T',\{\tau_k\}_{k\omega})$ be another cone for $(\{PK_k\}_{k\in\omega},\{f_k\}_{k\in\omega})$ in $Kl(P)$. Then, $(PT',\{\tau_k^\#\in\mathscr{P}r^L_\kappa(PT',PK_k) \}_{i\in\omega})$ is a cone from $\omega^{op}$-diagram $p$ in $\mathscr{P}r^L_\kappa$. Thus, there is a unique edge (under homotopy) $h:PT'\rightarrow T$ in $\mathscr{P}r^L_\kappa$ such that $\gamma_k.h\simeq\tau_k^\#$, for each $k\in\omega$. Applying the full faithful functor $(-)^\kappa:\mathscr{P}r^L_\kappa\rightarrow CAT_\infty^{Rex(\kappa)}$ \cite{DBLP:books/mk/Lurie}, one has that $h^\kappa\in Fun^\kappa (L^\ast T',T^\kappa)\simeq Fun(T',T^\kappa)$ is the unique edge (under homotopy) such that  $(\gamma_k.i).h^\kappa\simeq\tau_k^\#.j$, for each $k\in \omega$ with $j:L^\ast T'\rightarrow PT'$ being a Yoneda embedding. Thus, there is a unique (under homotopy) $h':T'\rightarrow T^\kappa$ such that $(\gamma_k.i).h'\simeq\tau_k$.
\end{proof}

Since each $Fun(A,PB)$ has initial object, any $Fun(A,PB)(F,G)$ is contractible or empty. Thus, $K(P)(A,B)\subseteq Fun(A,PB)$ admits a homotopy partial order (h.p.o.). For the following theorem, denote by $0_{PA,PB}$ in $Fun^L(PA,PB)$ as the constant functor in empty Kan complex $\emptyset$, that is
$$0_{PA,PB}f:=\boldsymbol{\lambda}x\in B.\emptyset$$

\begin{teor}
	The $\infty$-category $Kl(P)$ is a $0$-$\infty$-category.	
\end{teor}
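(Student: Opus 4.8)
The plan is to verify the three axioms in the definition of a $(0,\infty)$-category directly for $\mathcal{K}=Kl(L_P^\ast)$, using the identification $Kl(L_P^\ast)(A,B)\simeq Fun_\kappa(PA,PB)$ together with the h.p.o.\ introduced in Definition~\ref{h.p.o} and the fact, recalled just before it, that each $PA=[A^{op},\mathscr{S}]$ has enough points and $\mathscr{S}$ enjoys the CSB property.

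\medskip\noindent\textbf{Step 1 (the mapping spaces are c.h.p.o.'s).} First I would show that $Fun_\kappa(PA,PB)$, equipped with $\precsim$, is a homotopy partial order. The CSB property on $\mathscr{S}$ (from \cite{Escardo20}) lifts pointwise to $PB$: if $h_x\colon fx\hookrightarrow gx$ and $h'_x\colon gx\hookrightarrow fx$ are h-embeddings for every object $x$, then $fx\simeq gx$ naturally in $x$, so $f\simeq g$ in $PB$; and then the functor-level order $F\precsim G\precsim F$ forces $Ff\simeq Gf$ for every $f$, hence $F\simeq G$. For $\omega$-cocompleteness one takes an $\omega$-chain $F_0\precsim F_1\precsim\cdots$ in $Fun_\kappa(PA,PB)$ and forms the pointwise colimit $(\bigcurlyvee_i F_i)(f):=\mathrm{colim}_i\,F_i f$ in $PB$, which exists because $\mathscr{S}$ (hence $PB$) has $\omega$-colimits; one checks it still preserves $\kappa$-filtered colimits (filtered colimits commute with $\omega$-colimits) and that it is genuinely the colimit in the ordered sense. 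The least element is the constant functor $0_{PA,PB}=\boldsymbol{\lambda}x.\emptyset$ of Definition~\ref{h.p.o}, and $\emptyset\hookrightarrow S$ is an h-embedding into any $S\in\mathscr{S}$, so $0_{PA,PB}\precsim F$ for all $F$.

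\medskip\noindent\textbf{Step 2 (composition is continuous).} Composition in $Kl(L_P^\ast)$ is (up to the equivalence) composition of $\kappa$-filtered-colimit-preserving functors, $(G,F)\mapsto G\circ F$. Monotonicity: if $F\precsim F'$ then $Ff\precsim F'f$ pointwise, and since $G$ is built from colimits in $\mathscr{S}$ it preserves h-embeddings, giving $G(Ff)\precsim G(F'f)$; the argument in the first variable is immediate from the definition of $\precsim$ on functors. For continuity one uses that $G$ preserves the relevant $\omega$-colimits: $G(\bigcurlyvee_i F_i)(f)=G(\mathrm{colim}_i F_i f)\simeq \mathrm{colim}_i G(F_i f)=(\bigcurlyvee_i G\circ F_i)(f)$, and continuity in the left variable again reduces pointwise to $\omega$-colimits in $\mathscr{S}$. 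Axiom (3), $0_{PB,PC}\circ F\simeq 0_{PA,PC}$, is immediate since the left side sends every $f$ to the constant presheaf $\emptyset$.

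\medskip\noindent\textbf{Step 3 (assembling).} Combining Steps 1--3 yields all three clauses of the definition of a $(0,\infty)$-category, so $Kl(L_P^\ast)$ is one. I expect the main obstacle to be Step~1, specifically the careful check that the pointwise colimit of an $\omega$-chain of functors that preserve $\kappa$-filtered colimits \emph{still} preserves $\kappa$-filtered colimits and really computes the join in the homotopy order rather than merely a colimit cocone --- this is where the interaction between the two colimit notions ($\omega$-chains for the order, $\kappa$-filtered for the morphism condition) and the coherence issues of working in an $\infty$-category (all equalities only up to equivalence, mediating edges unique only up to equivalence) have to be handled with some care. The continuity of composition in the covariant variable is a close second, since it relies on the earlier observation (the $\Theta\simeq I_K$ computation) and on $\kappa$-filtered-colimit-preserving functors commuting with $\omega$-colimits in $\mathscr{S}$.
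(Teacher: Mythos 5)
Your proposal is correct and follows essentially the same route as the paper: verify the three axioms directly, with the constant-$\emptyset$ functor as least element, the colimit of an $\omega$-chain computed pointwise (your pointwise construction versus the paper's presentability argument plus the lemma $(\bigcurlyvee_i p_i)z\simeq\bigcurlyvee_i p_i z$ amount to the same thing), continuity of composition on both sides reduced to vertices via enough points, and axiom (3) by direct computation. The only cosmetic difference is that you re-derive the h.p.o./CSB structure inside the proof, whereas the paper treats that as settled in the discussion preceding Definition~\ref{h.p.o}.
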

\begin{proof}[Proof]
	\begin{enumerate}
		\item Since $PB$ is presentable, $Fun[A,PB]$ is presentable, thus $\langle Kl(P)(A,B),\precsim\rangle$ is complete. On the other hand, let $F$ be an object in  $Fun^L(PA,PB)$, then $$0_{PA,PB}fx=(\boldsymbol{\lambda}x\in B.\emptyset) x=\emptyset\subseteq Ffx,$$  for every object $f$ in $P A$ and $x$ in $B$. Thus, $0_{PA,PB}$ is an initial object in $Fun^L(PA,PB)$, i.e., $0_{A,B}$ is an initial object in $Kl(P)(A,B)$.
		\item Let $\{p_{i}\}_{i\in\omega}$ be a non-decreasing chain of morphisms in $Fun^L(PA,PB)$. By (1), the colimit  $\bigcurlyvee_{i\in\omega}p_{i}$ exists. 
		
		\begin{enumerate}
			
			\item[i.] First let's prove that for each object $z$ in $PA$, its colimit is given by 
			$$ (\bigcurlyvee_{i\in\omega}p_{i})z\simeq \bigcurlyvee_{i\in\omega}p_{i}z.$$
			Since $p_{i}\precsim\bigcurlyvee_{i\in\omega}p_{i}$, given a vertex $z$ of $PA$, by Definition \ref{h.p.o},  $p_{i}z\precsim(\bigcurlyvee_{i\in\omega}p_{i})z$. On the other hand, $\bigcurlyvee_{i\in\omega}p_{i}z$ is the supremum (under equivalence) of $ \{p_{i}z\} $, hence 
			$$p_{i}z\precsim\bigcurlyvee_{i\in\omega}p_{i}z\precsim(\bigcurlyvee_{i\in\omega}p_{i})z.$$
			Let $qz:=\bigcurlyvee_{i\in\omega}p_{i}z$, by Definition \ref{h.p.o}
			$$p_{i}\precsim q\precsim\bigcurlyvee_{i\in\omega}p_{i},$$
			but $\bigcurlyvee_{i\in\omega}p_{i} $ is the supremum (under equivalence) of $\{p_{i}\}_{i\in\omega}$, thus $ \bigcurlyvee_{i\in\omega}p_{i}\simeq q$, by Definition \ref{h.p.o},
			$$(\bigcurlyvee_{i\in\omega}p_{i})z\simeq\bigcurlyvee_{i\in\omega}p_{i}z.$$
			
			\item[ii.]Now let's prove that the composition is continuous on the right. Take a functor $F$ in $Fun_{\kappa}(PA',PA)$ and a vertex $z$ of $PA'$, then $Fz$ is a vertex of $PA$, by (i) we have
			\begin{align*}
				((\bigcurlyvee_{i\in\omega}p_{i}).F)z&\simeq(\bigcurlyvee_{i\in\omega}p_{i})(Fz) \\
				&\simeq\bigcurlyvee_{i\in\omega}p_{i}(Fz) \\
				&\simeq\bigcurlyvee_{i\in\omega}(p_{i}.F)z \\
				&\simeq(\bigcurlyvee_{i\in\omega}p_{i}.F)z,
			\end{align*}
			since $Kl(P)$ does have enough points, it follows 
			$Fun^L(PA',PB)$, 
			$$(\bigcurlyvee_{i\in\omega}p_{i}).F\simeq\bigcurlyvee_{i\in\omega}p_{i}.F.$$
			
			\item[iii.] Finally let's prove that the composition is continuous on the left. Let $G$ be a functor in $Fun^L(PB,PC)$ and $z$ an object in $PA$. By (i) and the continuity of $G$, we have
			\begin{align*}
				(G.\bigcurlyvee_{i\in\omega}p_{i})z&\simeq G((\bigcurlyvee_{i\in\omega}p_{i})z) \\
				&\simeq G(\bigcurlyvee_{i\in\omega}p_{i}z) \\
				&\simeq\bigcurlyvee_{i\in\omega}G(p_{i}z) \\
				&\simeq\bigcurlyvee_{i\in\omega}(G.p_{i})z \\
				&\simeq(\bigcurlyvee_{i\in\omega}G.p_{i})z,
			\end{align*}
			since $Kl(P)$ does have enough points, it follows
			$$G.\bigcurlyvee_{i\in\omega}p_{i}\simeq \bigcurlyvee_{i\in\omega}G.p_{i}.$$ 
		\end{enumerate}
		\item Let $F$ be an object in $Fun^L(PA,PB)$ and $f$ in $PA$, hence
		$$(0_{PB,PC}.F)f=0_{PB,PC}(Ff)= \boldsymbol{\lambda}x\in C.\emptyset=0_{PA,PC}f,$$ 
		that is, $0_{PB,PC}.F= 0_{PA,PC}$.
	\end{enumerate}
\end{proof}

\begin{prop}
	For any small $\infty$-category $A$, there is an h-projection from $A$ to $A\Rightarrow A$ in $Kl(P)$.
\end{prop}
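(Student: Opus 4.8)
The plan is to construct an explicit h-projection pair $(e^+, e^-)$ with $e^+ \colon A \hookrightarrow (A \Rightarrow A)$ and $e^- \colon (A\Rightarrow A) \to A$ in $Kl(L_P^\ast)$, and then verify the two defining conditions $e^+ . e^- \simeq I_A$ and $e^- . e^+ \precsim I_{A\Rightarrow A}$. The natural candidate for $e^+$ is the exponential transpose of the second projection $\pi_2 \colon A \times A \to A$ (equivalently, the map sending a point $a$ of $A$ to the constant functor $\boldsymbol{\lambda}x.\,a$ on $A\Rightarrow A$); for $e^-$ the natural candidate is ``evaluation at a base point'', i.e. precompose with the transpose and evaluate. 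Since $Kl(L_P^\ast)(A,B) \simeq Fun_\kappa(PA,PB)$, I would phrase everything concretely in terms of $\kappa$-filtered-colimit-preserving functors on presheaf categories, where $A \Rightarrow A$ is represented by the presheaf exponential and composition is functor composition.

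First I would pin down the representing data: using the cartesian closed structure of $Kl(L_P^\ast)$ (which holds since $PB$ is presentable, cf. the previous theorem), identify $A \Rightarrow A$ and the evaluation map $\mathrm{ev}\colon (A\Rightarrow A)\times A \to A$. Then set $e^+ := \widetilde{\pi_2}$, the transpose of $A\times A \xrightarrow{\pi_2} A$, and $e^- := \mathrm{ev} . (I \times \iota)$ where $\iota$ picks out a chosen point of $A$ — here I would use that $PA$ has a terminal object and enough points, or work one object/point at a time, to make the ``choice of base point'' harmless up to the homotopy order. Next I would compute $e^+ . e^-$: unwinding the transpose, $e^- . e^+$ applied to $a$ evaluates the constant functor $\boldsymbol{\lambda}x.\,a$ at the base point, returning $a$, so $e^+ . e^- \simeq I_A$ — this is a routine transpose/$\beta$-style calculation valid up to coherent equivalence in the $\infty$-categorical setting. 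Then I would check $e^- . e^+ \precsim I_{A\Rightarrow A}$: $e^- . e^+$ sends a functor $F$ to the constant functor $\boldsymbol{\lambda}x.\,F(\ast)$, and I must exhibit a $2$-simplex to $F$ that is objectwise an h-embedding in $\mathscr{S}$, i.e. a natural ``inclusion'' of the value $F(\ast)$ into each value $F(x)$. For this step I would use the $(0,\infty)$-structure: the relevant map is built from the unique maps out of the least element, or more robustly from the canonical map $F(\ast) \to F(x)$ induced by the unique edge $\ast \to x$ when $\ast$ is terminal in $PA$; objectwise this must be checked to be an h-embedding in $\mathscr S$, which is where the h.p.o.\ definition on $PA$ and the CSB property of $\mathscr S$ from \cite{Escardo20} get used.

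The main obstacle I anticipate is the verification that $e^- . e^+ \precsim I_{A\Rightarrow A}$ genuinely holds \emph{objectwise as an h-embedding}, not merely as a map: I need the component $F(\ast) \to F(x)$ to be essentially full and faithful in $\mathscr S$ for every $x$, and this is not automatic for an arbitrary $\kappa$-continuous $F$ unless the base point $\ast$ is suitably initial/terminal and $F$ is suitably pointed, so I expect to need either (a) to restrict attention to a base point that is terminal in $A$ (hence $\ast \to x$ has a section, forcing the $\mathscr S$-map to split and be an h-embedding), or (b) to insert the least element $0$ and argue $0 \precsim F(\ast) \hookrightarrow F(x)$ via condition (3) of the $(0,\infty)$-category definition. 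A secondary subtlety is coherence: all equalities here are up to equivalence and all the ``$\simeq$'' need to be assembled into actual $2$- and $3$-simplices, but this is exactly the kind of bookkeeping the paper has been handling by appeal to ``enough points'' and the $0$-categorical analogues in \cite{DBLP:books/mk/Asperti91}, so I would close by remarking that uniqueness ``under equivalence'' of the h-projection $A \precsim (A\Rightarrow A)$ follows from the uniqueness of adjoints (the Remark after Definition \ref{h-projection-inftycategory}).
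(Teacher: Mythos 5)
There is a genuine gap, and it is located exactly where your construction starts: your candidate $e^+$ does not land in the exponential. In $Kl(L_P^\ast)$ the object $A\Rightarrow A$ is $A^{op}\times A$, so a point of it is an object of $P(A\Rightarrow A)=P(A^{op}\times A)\simeq [PA,PA]^L$, i.e.\ a functor $PA\rightarrow PA$ preserving \emph{all} small colimits (a left adjoint), not merely a $\kappa$-continuous one. The constant functor $\boldsymbol{\lambda}x.\,a$ at a non-initial presheaf $a$ does not preserve the initial object (or any non-connected colimit), so the transpose of $\pi_2$ is not a morphism $A\rightarrow (A\Rightarrow A)$ in this category. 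The classical Scott-style pair ``constant function / evaluate at bottom'' lives naturally in an exponential of \emph{all} continuous maps and does not transport to the profunctor exponential used here. Moreover, the obstacle you yourself flag for $e^-.e^+\precsim I_{A\Rightarrow A}$ is real and neither of your fallbacks closes it: a split monomorphism in $\mathscr{S}$ (coming from a section of $\ast\rightarrow x$) need not be fully faithful — on homotopy groups it is only injective, e.g.\ $X\rightarrow X\times Y$ — and condition (3) of the $(0,\infty)$-category definition only yields $0\precsim F(\ast)$, not the needed comparison $F(\ast)\precsim F(x)$ for an arbitrary $F$.

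The paper avoids both problems by choosing a different embedding and by not writing the projection down explicitly. It takes the diagonal functor $\delta:PA\rightarrow [PA,PA]^L\simeq P(A^{op}\times A)=P(A\Rightarrow A)$, which \emph{does} preserve small colimits, and then invokes the Adjoint Functor Theorem to obtain a right adjoint $\gamma$. Full faithfulness of $\delta$ makes the unit an equivalence, giving $\gamma.\delta\simeq I_{PA}$, and the counit of the adjunction supplies the required $2$-simplex witnessing $\delta.\gamma\precsim I_{P(A\Rightarrow A)}$. In other words, the adjunction packages for free exactly the coherence data and the comparison map that your hand-built formulas have to manufacture and verify objectwise. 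If you want to salvage your approach, you would need to replace the constant-functor embedding by a colimit-preserving one (and at that point you are essentially reconstructing the paper's $\delta$ and its adjoint).
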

\begin{proof}[Proof]
	
	We have that there is a diagonal functor $$\delta:PA\rightarrow [PA,PA]^L\simeq P(A^{op}\times A)= P(A\Rightarrow A),$$ where $[PA,PA]^L$ is the $\infty$-category of the functors which preserve  small colimits or left adjoints. Since $\delta$ preserves all  small colimits, by the Adjoint Functor Theorem, $\delta$  has a right adjoint $\gamma$ \cite{DBLP:books/mk/Lurie}. One the other hand, the diagonal functor $\delta$ is an h-embedding, then the unit is an equivalence, i.e., $\gamma.\delta\simeq I_{PA}$ and the counit is an h.p.o., that is, $\delta.\gamma\precsim I_{P(A\Rightarrow A)}$ in the c.h.p.o.\ $Kl(P)(A\Rightarrow A,A\Rightarrow A)$. Thus, $(\delta,\gamma)$ is a projection pair of $A$ to $A\Rightarrow A$ in $Kl(P)$, which we call the diagonal projection. 
\end{proof}
\begin{prop}
	There is a reflexive non-contractible Kan complex  in $Kl(P)$.
\end{prop}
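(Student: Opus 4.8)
The plan is to run the fixed-point construction of Section~2 inside $\mathcal{K}=Kl(L_P^\ast)$, seeded at a \emph{non-contractible} small $\infty$-category, and then to transport non-contractibility along the colimit cocone by means of the h-embedding it provides. Fix a small $\infty$-category $A_0$ that is not equivalent to the point; the discrete $\infty$-category on two objects (the two-point Kan complex) is the simplest choice, $S^1$ a natural alternative if higher information is wanted. By the theorem above $\mathcal{K}=Kl(L_P^\ast)$ is a $(0,\infty)$-category, by Proposition~\ref{limits-proposition-Kl(P)} it is $\omega^{op}$-complete, and it is cartesian closed with final object, so the functor $F=(\Rightarrow)^{+-}.(\Delta)^{+-}\colon\mathcal{K}^{HPrj}\to\mathcal{K}^{HPrj}$ of the Remark following Theorem~\ref{continuous-functor} is $\omega$-continuous, with $FA_0=A_0\Rightarrow A_0$. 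The preceding proposition (the diagonal projection) supplies an h-projection pair $\delta=(\delta^+,\delta^-)\in\mathcal{K}^{HPrj}(A_0,FA_0)$, which is exactly an edge $\delta\colon A_0\to FA_0$ as required by Theorem~\ref{point-fixed-theorem}. By Theorem~\ref{colimit-HPrj-category} together with Proposition~\ref{limits-proposition-Kl(P)}, the $\omega$-diagram $(\{F^iA_0\}_{i\in\omega},\{F^i\delta\}_{i\in\omega})$ has a colimit $(K,\{\delta_{i,\omega}\}_{i\in\omega})$ in $\mathcal{K}^{HPrj}$, with $K$ a genuine object of $Kl(L_P^\ast)$ (same objects as $\mathcal{K}$), and Theorem~\ref{point-fixed-theorem} gives $K\simeq FK=(K\Rightarrow K)$. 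Thus $K$ is reflexive.

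It remains to see that $K$ is not contractible. The cocone component $\delta_{0,\omega}=(\delta_{0,\omega}^+,\delta_{0,\omega}^-)$ is a morphism of $\mathcal{K}^{HPrj}$, hence an h-projection pair from $A_0$ to $K$; in particular $\delta_{0,\omega}^+\colon A_0\to K$ is an h-embedding, i.e.\ (under $Kl(L_P^\ast)(A,B)\simeq Fun_\kappa(PA,PB)$, with $I_A$ acting as $\mathrm{id}_{PA}$) a fully faithful functor $PA_0\hookrightarrow PK$, and $\delta_{0,\omega}^-.\delta_{0,\omega}^+\simeq I_{A_0}$ exhibits $PA_0$ as a retract of $PK$. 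Suppose, for contradiction, that $K\simeq\ast$. Then $PK\simeq P\ast=\mathscr{S}$, so $PA_0$ would embed fully faithfully into $\mathscr{S}$. For $A_0$ the two-object discrete category this reads $\mathscr{S}\times\mathscr{S}\hookrightarrow\mathscr{S}$ fully faithfully: the images $a,b\in\mathscr{S}$ of $(\ast,\emptyset)$ and $(\emptyset,\ast)$ would have contractible endomorphism spaces --- forcing $a,b\in\{\emptyset,\ast\}$ up to equivalence --- while $\mathrm{Map}_{\mathscr{S}}(a,b)\simeq\mathrm{Map}_{\mathscr{S}}(b,a)\simeq\emptyset$, which is impossible for any pair drawn from $\{\emptyset,\ast\}$ (the only pair with empty mapping space in one direction is $(\ast,\emptyset)$, and it has non-empty mapping space in the other). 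This contradiction shows $K\not\simeq\ast$; a cardinality count of objects in $PK$ likewise rules out $K\simeq\emptyset$. Hence $K$ is a reflexive non-contractible object of $Kl(L_P^\ast)$.

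The main obstacle is the non-contractibility step: fixing a usable meaning of ``contractible'' and, more substantively, checking that the non-trivial seed $A_0$ is not collapsed by the passage $A\mapsto PA$ and the ensuing $\omega^{op}$-limit --- equivalently, that $PK$ cannot become equivalent to $\mathscr{S}$. Everything in the first paragraph is a direct instantiation of the machinery of Section~2 together with the diagonal-projection proposition, so the only real work is in the obstruction argument, for which the two-point seed keeps the computation entirely elementary.
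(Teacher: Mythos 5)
Your proposal is correct and follows essentially the same route as the paper: seed the $\omega$-colimit construction in $(Kl(L_P^\ast))^{HPrj}$ at a non-trivial object via the diagonal projection, obtain a reflexive $K\simeq(K\Rightarrow K)$ from Theorem \ref{point-fixed-theorem}, Theorem \ref{colimit-HPrj-category} and Proposition \ref{limits-proposition-Kl(P)}, and transport non-triviality of the seed along the h-embedding $\delta_{0,\omega}^+\colon A_0\hookrightarrow K$. The only divergence is the final step, where the paper concludes abstractly from the strictness of the homotopy order ($\triangle^0\prec K_0\precsim K$, hence $K\not\simeq\triangle^0$ by the CSB property), while you certify the same fact concretely by showing $\mathscr{S}\times\mathscr{S}$ admits no fully faithful embedding into $\mathscr{S}$ --- a more explicit but equivalent argument.
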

\begin{proof}[Proof]
	First the trivial Kan complex $\Delta^0$ is a fixed point from endofunctor $FX=(X\Rightarrow X)$ on $Kl(P)$, since
	$$P\Delta^0\simeq P(\Delta^0\times \Delta^0)= P(\Delta^0\Rightarrow\Delta^0).$$
	that is, $\Delta^0\simeq (\Delta^0\Rightarrow\Delta^0)=F\Delta^0$ in $Kl(P)$. Let's suppose that all the Kan complexes in $Fix(F)$ are equivalent to $\Delta^0$. Since $Kl(P)$ contains all the small $\infty$-categories, then there is a small non-contractible Kan complex $K_0$, i.e.,
	$$\Delta^0\prec K_0\stackrel{(\delta_0,\gamma_0)}{\longrightarrow} FK_0$$ 
	in $(Kl(P))^{HPrj}$ for all $n\in\omega$, where $\delta_0$ is the diagonal functor, which has its equivalent functor in $\mathscr{P}^R_\kappa=(\mathscr{P}^L_\kappa)^{op}$ \cite{DBLP:books/mk/Lurie}. Since   $(\{F^iK_0\}_{i\in\omega},\{F^i(\gamma_0)\}_{i\in\omega})$ is an $\omega^{op}$-diagram in $\mathscr{P}^L_\kappa$, by Proposition \ref{limits-proposition-Kl(P)} and Theorem \ref{colimit-HPrj-category}, there is a colimit $(K,\{(\delta_{i,\omega},\gamma_{\omega,i})_{i\in\omega}\})$ in $(Kl(P))^{HPrj}$ for the $\omega$-diagram  $(\{F^iK_0\}_{i\in\omega},\{F^i(\delta_0,\gamma_0)\}_ {i\in\omega})$. Thus,
	$$\Delta^0\prec K_0\stackrel{\delta_{0,\omega}}{\hookrightarrow} K \in Fix(F)$$
	which is a contradiction. 
\end{proof}

\begin{defin}
	Let $X$ be a Kan complex. Define $\pi_0(X):=\pi_0(|X|)$ and $\pi_n(X,x):=\pi_n(|X|,x)$ for $n>0$, with $|\,\,\,|:Kan\rightarrow Top$ be the functor of geometric realization from the category of the Kan complexes to the category of the topological spaces.
\end{defin}

The fact that a Kan complex $X$ is not contractible does not imply that every vertex $x\in X$ contains relevant information, that is, the higher fundamental groups $\pi_n(|X|,x)$ are not trivial, nor that it contains holes in all the higher dimensions. To guarantee the existence of non-trivial Kan complexes as higher $\lambda$-models, we present the following definition.

\begin{defin}[Non-trivial Kan complex]
	A small Kan complex $X$ is non-trivial if
	\begin{enumerate}
		\item  $\pi_0(X)$ is infinite. 
		\item for each $n\geq 1$, there is a vertex $x\in X$ such that $\pi_n(X,x)\ncong\ast$.
	\item for each vertex $x$ of some non-degenerated horn in $X$, there is $n\geq 1$ such $\pi_n(X,x)\ncong\ast$.
	\end{enumerate}
	
\end{defin}

\begin{ejem}\label{Example1}
	For each $n\geq 0$, let  $S^n$ the sphere of Remark \ref{Remark}. Define the non-trivial Kan complex $B_0$ as the disjoint union:
	$$B_0=\coprod_{n<\omega}S^n.$$
\end{ejem}

 Furthermore, $\pi_{n}(S^n)\ncong\ast$ for all $n\geq 1$, and there is $k > n$ such that $\pi_{k}(S^n)\ncong\ast$ for each $n\geq 2$ \cite{DBLP:books/mk/Hatcher01}. 

\begin{ejem}\label{Example2}
	For each $n\geq 1$, let $D^{n+1}$ be a Kan complex,  such that its $k$-th face set the isomorphism  
	\begin{equation*}
		d_kD^{n+1}\cong
		\begin{cases}
			B^{n}& \text{if \, $k=0$,}\\
			S^n & \text{if \, $1\leq k\leq n+1$,}
		\end{cases}
	\end{equation*}
	Where $B^n$ is the union of the sphere $S^n$ with its interior. Define the non-trivial Kan complex $D_0$ as the disjoint union:
	$$D_0=\coprod_{n<\omega}D^{n+1}.$$
\end{ejem}

Note that each Kan complex $D^{n+1}$, with $n\geq 1$, corresponds to the sphere $S^{n}$ with $n+1$ holes. Besides that its higher groups have the same properties from Example \ref{Example1}, we also have the additional property $\pi_{n-1}(D^{n+1})\ncong\ast$ for all $n\geq 2$.

\begin{ejem}\label{Example3}
	For each $n\geq 2$, let $E^n$ be a Kan complex,  such that its $k$-th face set the isomorphism  
	\begin{equation*}
		d_kE^n\cong
		\begin{cases}
			\partial^{n-2}\Delta^{n-1}_\bullet& \text{if \, $0\leq k\leq 2$,}\\
			\partial^{n-k}\Delta^{n-1}_\bullet & \text{if \, $3\leq k\leq n$.}
		\end{cases}
	\end{equation*}
	Define the non-trivial Kan complex $E_0$ as the disjoint union
	$$E_0=\coprod_{n<\omega}E^n.$$ 
\end{ejem}

Where $\Delta^n_\bullet$ have the same vertices and faces of $\Delta^n$ but invertible 1-simplexes. Note that $E_0$ has more information than the non-trivial Kan complexes $B_0$ and $D_0$ from the previous examples, in the sense that for all $n\geq 2$, it satisfies the property $\pi_{k}(E^n)\ncong\ast$  for each $1\leq k\leq n-1$.

\begin{prop}
	For every non-trivial Kan complex $K_0$, there exists a non-trivial Kan complex $K\in Fix(F)$ above $K_0$.
\end{prop}
\begin{proof}[Proof.]
	Let $K_0$ be a non-trivial  Kan complex and $(K,\{(\delta_{i,\omega},\delta_{\omega,i})\}_{i\in\omega})$ the colimit from $\omega$-diagram $(\{F^iK_0\}_{i\in\omega},\{F^i(\delta_0,\gamma_0)\}_{i\in\omega})$ in $(Kl(P))^{HPrj}$, with $(\delta_0,\gamma_0)$ the first projection from $K_0$ to $K_1:=FK_0$. Let $z=(x,y)$ be a vertex of some $k$-simplex in $K_{i+1}$, with $k\geq 2$. By induction on $i$, there is $n_1,n_2\geq 1$, such that $\pi_{n_1}(K_i,x),\pi_{n_2}(K_i,x)\ncong\ast$. For any $n\in \{n_1,n_2\}$, one has 
	\begin{align*}
		\pi_n(K_{i+1},z)&=\pi_n(K_i\Rightarrow K_i,z) \\
		&=\pi_n(K_i\times K_i,(x,y)) \\
		&\cong\pi_n(K_i,x)\times\pi_n(K_i,y) \\
		&\ncong \ast
	\end{align*}
	Given any vertex $y$ of some $k$-simplex in $K$, with $k\geq 2$. There is $i\geq 0$ and $x\in K_i$ such that $\delta_{i,\omega}x\simeq y$. Since there is $\pi_n(K_i,x)\ncong\ast$, then 
	$$\pi_n(K,y)\cong\pi_n(K,\delta_{i,\omega}x)\ncong\ast.$$
\end{proof}

\begin{defin}[Non-Trivial Homotopy $\lambda$-Model]
	Let $\mathcal{K}$ be a Cartesian closed $\infty$-category with enough points. A Kan complex $K\in\mathcal{K}$ is a non-trivial homotopy $\lambda$-model if $K$ is a reflexive non-trivial Kan complex. 
\end{defin}

Note that every non-trivial homotopy $\lambda$-model is a homotopic $\lambda$-model as defined in \cite{Martinez} and \cite{MartinezHoDT}, which only captures information up to dimension 2 (for equivalences (2-paths) of 1-paths between points). While the non-trivial homotopy $\lambda$-models have no dimensional limit to capture relevant information, and therefore, those can generate a richer higher $\lambda$-calculus theory.

\begin{ejem}
	Given the non-trivial Kan complexes $B_0$, $D_0$ and $E_0$ of the Examples \ref{Example1}, \ref{Example2} and \ref{Example3} respectively. Starting from the diagonal projection as the initial projection pair,  these initial objects will generate the respective non-trivial Kan complexes $B$, $D$ and $E$ in $Fix(F)$. One can see that the non-trivial homotopy $\lambda$-model $E$  has more information than the homotopic $\lambda$-models $B$ and $D$.
\end{ejem}

\section{Conclusions and further work}
Some methods were established for solving homotopy domain equations, which further contributes to the project of a generalization of the Domain Theory to a Homotopy Domain Theory (HoDT). Using those methods of solving equations, it was possible to obtain some specific homotopy models in a Cartesian closed $\infty$-category, which could help to define a general higher $\lambda$-calculus theory. 

\bigskip Therefore, in this work, we would be facing the beginning of the semantics of another version of Martin-Löf's type theory (MLTT), but based on computational paths (\cite{Queiroz2016} and \cite{Ramos2017}). We will continue working on semantics for this version MLTT, where the types are interpreted as ordered Kan complexes, aiming at the development of new interactive theorem provers such as Coq, Lean, and similar systems.

\bigskip The above would be for the typed case for future work. For the case type-free, HoDT should continue to be developed, parallel to Dana Scott’s Domain Theory, and all the semantics of higher $\lambda$-calculus and its relation with HoTFT (Homotopy Type-Free Theory) \cite{Martinez2HoDTvsHoTT21}.

\bibliography{mybibfile}

\begin{thebibliography}{10}
\expandafter\ifx\csname url\endcsname\relax
  \def\url#1{\texttt{#1}}\fi
\expandafter\ifx\csname urlprefix\endcsname\relax\def\urlprefix{URL }\fi
\expandafter\ifx\csname href\endcsname\relax
  \def\href#1#2{#2} \def\path#1{#1}\fi

\bibitem{DBLP:books/mk/Abramsky94}
S.~Abramsky, A.~Jung, Domain theory, In S. Abramsky, D. M. Gabbay, and T. S. E. Maibaum, editors, Handbook of Logic in Computer Science, volume 3, pages 1--168, Clarendon Press, 1994.

\bibitem{DBLP:books/mk/Asperti91}
A.~Asperti, G.~Longo, Categories, Types and Structures: An Introduction to Category Theory for the working computer scientist, Foundations of Computing Series, M.I.T Press, 1991.

\bibitem{MartinezHoDT}
D.~Mart\'{\i}nez-Rivillas, R.~de~Queiroz, \href{https://doi.org/10.1007/s00153-022-00856-0}{Towards a homotopy domain theory}, Archive for Mathematical Logic (also arXiv 2007.15082).
\newline\urlprefix\url{https://doi.org/10.1007/s00153-022-00856-0}

\bibitem{Queiroz2016}
R.~de~Queiroz, A.~de~Oliveira, A.~Ramos, Propositional equality, identity types, and direct computational paths, South American Journal of Logic 2~(2) (2016) 245--296.

\bibitem{Ramos2017}
A.~Ramos, R.~de~Queiroz, A.~de~Oliveira, On the identity type as the type of computational paths, Logic Journal of the IGPL 25~(4) (2017) 562--584.

\bibitem{Thiago2023}
T.~M.~L. de~Veras, A.~Ramos, R.~de~Queiroz, A.~de~Oliveira, \href{https://doi.org/10.1093/logcom/exad071}{Computational paths - a weak groupoid}, Journal of Logic and Computation, 2023.
\newline\urlprefix\url{https://doi.org/10.1093/logcom/exad071}

\bibitem{DBLP:books/mk/HindleyS08}
J.~Hindley, J.~Seldin, Lambda-Calculus and Combinators, an Introduction, Cambridge University Press, New York, 2008.

\bibitem{Martinez}
D.~Mart\'{\i}nez-Rivillas, R.~de~Queiroz, The $\infty$-groupoid generated by an arbitrary topological $\lambda$-model, Logic Journal of the IGPL 30~(3) (2022) 465--488 https://doi.org/10.1093/jigpal/jzab015, (also arXiv:1906.05729).

\bibitem{Martinez21}
D.~Mart\'{\i}nez-Rivillas, R.~de~Queiroz, \href{https://arxiv.org/abs/2104.01195}{Solving homotopy domain equations}, arXiv:2104.01195.
\newline\urlprefix\url{https://arxiv.org/abs/2104.01195}

\bibitem{Martinez2HoDTvsHoTT21}
D.~Mart\'{\i}nez-Rivillas, R.~de~Queiroz, The theory of an arbitrary higher $\lambda$-model, Bulletin of the Section of Logic 52~(1) (2023) 39--58 https://doi.org/10.18778/0138--0680.2023.11, (also arXiv 2111.07092).

\bibitem{DBLP:books/mk/Univalent13}
T.~U.~F. Program, Homotopy Type Theory: Univalent Foundations of Mathematics, Princeton, NJ: Institute for Advanced Study, 2013.

\bibitem{Voevodsky}
C.~Kapulkin, P.~Lumsdaine, \href{https://arxiv.org/abs/1211.2851}{The simplicial model of univalent foundations (after voevodsky)}, arXiv:1211.2851.
\newline\urlprefix\url{https://arxiv.org/abs/1211.2851}

\bibitem{Lumsdaine}
P.~Lumsdaine, M.~Shulman, Semantics of higher inductive types, Mathematical Proceedings of the Cambridge Philosophical Society 169 (2020) 159--208.

\bibitem{DBLP:books/mk/Lurie}
J.~Lurie, Higher Topos Theory, Princeton University Press, Princeton and Oxford, 2009.

\bibitem{DBLP:books/mk/Rezk22}
C.~Rezk, \href{https://faculty.math.illinois.edu/~rezk/quasicats.pdf}{Introduction to Quasicategories}, Lecture Notes for course at University of Illinois at Urbana-Champaign, 2022.
\newline\urlprefix\url{https://faculty.math.illinois.edu/~rezk/quasicats.pdf}

\bibitem{DBLP:books/mk/Cisinski}
D.-C. Cisinski, Higher Categories and Homotopical Algebra, Cambridge University Press, 2019.

\bibitem{DBLP:books/mk/Rezk17}
C.~Rezk, Stuff about quasicategories, Lecture Notes for course at University of Illinois at Urbana-Champaign, 2017.

\bibitem{DBLP:books/mk/Groth15}
M.~Groth, \href{https://arxiv.org/abs/1007.2925}{A short course on $\infty$-categories}, 2015.
\newline\urlprefix\url{https://arxiv.org/abs/1007.2925}

\bibitem{Joyal2002}
A.~Joyal, Quasi-categories and kan complexes, Journal of Pure and Applied Algebra 175~(1) (2002) 207--222.

\bibitem{Simpson99}
C.~Simpson, \href{https://arxiv.org/abs/math/9903167}{Giraud-type characterization of the simplicial categories associated to closed model categories as $\infty$-pretopoi}, arxiv:9903167.
\newline\urlprefix\url{https://arxiv.org/abs/math/9903167}

\bibitem{Hyland14}
M.~Hyland, Elements of a theory of algebraic theories, Theoretical Computer Science 546 (2014) 132--144.

\bibitem{DBLP:books/mk/Hatcher01}
A.~Hatcher, Algebraic Topology, Cambridge University Press, New York, NY, 2001.

\end{thebibliography}

\end{document}